\newtheoremstyle{MyThm}
  {3pt}
  {3pt}
  {\itshape}
  {\parindent}
  {\bfseries}
  {.}
  {.5em}
  {}
\theoremstyle{MyThm}
\newtheorem{Definition}{Definition}[section]
\newtheorem{Corollary}[Definition]{Corollary}
\newtheorem{Lemma}[Definition]{Lemma}
\newtheorem{Proposition}[Definition]{Proposition}
\newtheorem{Remark}[Definition]{Remark}
\newtheorem{Theorem}[Definition]{Theorem}
\newcommand{\C}{\mathbb{C}}
\newcommand{\Z}{\mathbb{Z}}
\newcommand{\cin}{C^\infty}
\newcommand{\cC}{\mathcal{C}}
\newcommand{\cO}{\mathcal{O}}
\newcommand{\cH}{\mathcal{H}}
\newcommand{\olnabla}{\overline{\nabla}}
\newcommand*{\pd}[2]{\mathchoice{\frac{\partial#1}{\partial#2}}
  {\partial#1/\partial#2}{\partial#1/\partial#2}
  {\partial#1/\partial#2}}
\DeclareMathOperator{\End}{End}
\DeclareMathOperator{\Hom}{Hom}
\DeclareMathOperator{\Id}{Id}
\DeclareMathOperator{\dive}{div}
\title{\bf The geometry of\\ real reducible polarizations\\
   in quantum mechanics}
\author{
\bf Carlos Tejero Prieto$^1$, Raffaele Vitolo$^2$
\bigskip
\\
\footnotesize $^1$ Departamento de Matematicas, Universidad de Salamanca
\\
\footnotesize and IUFFyM
\\
\footnotesize Pl. de la Merced 1--4, 37008 Salamanca, Spain
\\
\footnotesize email: {\tt carlost@usal.es}
\medskip
\\
\footnotesize $^2$Dipartimento di Matematica e Fisica ``E. De Giorgi,
Universit\`a del Salento
\\
\footnotesize and INFN -- Sezione di Lecce
\\
\footnotesize Via per Arnesano, 73100 Lecce, Italy
\\
\footnotesize email: {\tt raffaele.vitolo@unisalento.it}
}
\date{\today}
\begin{document}
\maketitle
\begin{abstract}
  The formulation of Geometric Quantization contains several axioms and
  assumptions.  We show that for real polarizations we can generalize the
  standard geometric quantization procedure by introducing an arbitrary
  connection on the polarization bundle. The existence of reducible quantum
  structures leads to considering the class of Liouville symplectic
  manifolds. Our main application of this modified geometric quantization
  scheme is to Quantum Mechanics on Riemannian manifolds. With this method we
  obtain an energy operator without the scalar curvature term that appears in
  the standard formulation, thus agreeing with the usual expression found in
  the Physics literature.

  \textbf{Keywords}: Geometric quantization, real polarization, Schr\"{o}dinger
  equation in curved spaces.

  \textbf{MSC 2010}: 81S10; 53D50
\end{abstract}

\section{Introduction}

Quantization of Classical Mechanics is a procedure introduced by Dirac that
starting from a classical observable yields a quantum operator. It is well
known that it is not possible to consistently quantize all observables $f(q,p)$
on a classical phase space, as this would lead to physical and mathematical
contradictions (see \emph{e.g.}  \cite{Sni80,Woo92}). A \emph{polarization} is
a mathematical structure that allows a coordinate-free definition of a subset
of quantizable observables and a Hilbert space of wave functions on which the
quantum operators act.

Polarizations can be real or complex. The latter have attracted the interest of
more researchers mainly due to two reasons. In first place complex
polarizations exist even when real polarizations cannot exist (consider, for
instance, the symplectic manifold $S^2$) and in second place they have
applications to representation theory~\cite{GuSt82}. During our research we
have realized that a class of real polarizations, the reducible polarizations,
have a geometry that is richer than what appears in the standard expositions of
quantization. In this paper we will study the geometry of real reducible
polarizations, and discuss some interesting consequences for the quantization
process.

Our framework is that of Geometric Quantization (GQ, \cite{Sni80,Woo92}), where
we have a classical Hamiltonian system on a symplectic manifold $(M,\omega)$,
$\dim M = 2n$. A real polarization is an integrable Lagrangian distribution
$P\subset TM$.  The choice of a polarization is the infinitesimal analogue of
the choice of a set of (locally defined) Poisson--commuting observables.  There
are well-known existence theorems concerning polarizations and other geometric
structures in GQ; the theorems impose restrictions on the topology of the
underlying symplectic manifold.

\emph{Quantum states} are represented by sections $\psi\colon M\to \mathcal L$
of a Hermitian complex line bundle $\mathcal L\to M$ endowed with a Hermitian
connection $\nabla^{\mathcal L}$ whose curvature is proportional to the
symplectic form $\omega$.

\emph{Quantizable observables} $\mathcal{O}_P$ are then just those classical
observables $f\in\cin(M)$ whose Hamiltonian vector field $X_f$ preserves the
polarization. Such vector fields act on \emph{polarized} sections $\psi$, that
is sections which are (covariantly) constant along the polarization, thus
providing a candidate for quantum operators. However the scalar product $\int
\psi_1\overline{\psi_2}$ of two wave functions $\psi_1$, $\psi_2$ is
ill-defined since the space of leaves $M/P$, even if it exists as a Hausdorff
manifold, does not have in general a natural choice of volume element.  A
solution is to redefine wave functions to have values in
$L\otimes\sqrt{\wedge^n_\C P}$. The $L^2$-completion of the subspace of
polarized wave functions is a Hilbert space $\mathcal{H}_P$ that represents
quantum states.

The polarization also allows us to define a \emph{representation}
$\mathcal{O}_P\to\End(\mathcal{H}_P)$ of quantizable classical observables into
quantum operators.

The best known example of a real polarization is the vertical polarization
$P=VT^*Q=\ker \tau^\vee_Q$ of the phase space of any \emph{natural mechanical
  system} described by the symplectic manifold $M=T^*Q$
endowed with the standard symplectic form $\omega_0$ or a charged symplectic
form $\omega_0+B$ and with a Riemannian metric $g$ on the configuration space
$Q$. If we consider the natural Darboux coordinates $(q^i,p_i)$ for $\omega_0$,
then the vertical polarization is spanned by the vector fields $\pd{}{p_i}$,
the family of commuting observables is $(q^i)$ and quantizable observables
$f\in\mathcal{O}_P$ are linear functions of momenta $f=X^i(q^j)p_i +
f^0(q^j)$. This is the \emph{Schr\"odinger representation}. Note that $P$ is
naturally isomorphic to $T^*Q\times_Q T^*Q$.

One of the key problems that appear in geometric quantization is the need to
quantize observables whose Hamiltonian vector fields do not respect the chosen
polarization. The most important and basic example is the kinetic energy
$K=\frac{1}{2m}\sum_i p_i^2$ of a natural mechanical system with respect to
the Schr\"odinger representation. The standard method for quantizing such
observables is the Blattner-Kostant-Sternberg (BKS) method \cite{Sni80,Woo92}.
The quantum operator corresponding to ``bad'' observables is defined as a kind
of Lie derivative of the wave functions which are first dragged along the flow
of the corresponding Hamiltonian vector field and then projected to the initial
Hilbert space through the BKS kernel. Recent research \cite{FMMN06} has shown
that the BKS procedure can be interpreted as parallel transport with
respect to a connection on an infinite--dimensional space, and is related to
the Coherent State Transform \cite{Ha02}. BKS also plays an important role in
studying the Maslov correction to semiclassical states \cite{EMN15}.

The remark by which we started our investigation is that \emph{real reducible
  polarizations may have an additional geometric structure}. Our main example
is the vertical polarization $P=T^*Q\times_M T^*Q$ of a natural mechanical
system $(T^*Q,\omega_0)$, where $(Q,g)$ is a Riemannian manifold. Clearly, $P$
can be endowed with the connection obtained by pulling-back the Levi-Civita
connection $\nabla^g$ of the Riemannian manifold $(Q,g)$. Strangely enough,
this connection has never been considered in GQ.

Therefore, we decided to explore the consequences implied by using such a
connection on the polarization bundle. With this aim in mind we had to
understand the mathematical hypotheses under which the introduction of a new
connection would be most natural.  A derivation along $P$ is needed in GQ in
order to define the Hilbert space of states, \emph{i.e.} the polarized wave
functions.  When the space of leaves $M/P$ is a manifold it may happen that
both the line bundle $\mathcal L$ and the polarization $P$ descend to $M/P$, as
in the case of natural mechanical systems, and the action of quantum operators
on polarized sections is given through the Lie derivative with respect to
fundamental horizontal vector fields on the factor $\mathcal L$ and through the
ordinary Lie derivative on the factor $\sqrt{\wedge^n_\C P}$.

It turns out that he most natural framework for considering alternative
connections on the polarization bundle $P$ is that of \emph{real reducible
  polarizations} (Section~\ref{sec:geom-polar-modif}). A reducible polarization
$P$ on a symplectic manifold $M$ is by definition isomorphic to the vertical
bundle of a surjective submersion $\pi\colon M\to Q$. One also needs
reducibility hypotheses that guarantee the descent of $\mathcal L$ and, to some
extent, of the connection $\nabla^{\mathcal L}$. The reducibility of the
prequantum structure $(\mathcal L,\langle\ ,\, \rangle^{\mathcal
  L},\nabla^{\mathcal L})$ is defined as the existence of a complex line bundle
${L}\to Q$ with Hermitian product $\langle\ ,\, \rangle$ such that $\pi^*{L}$
is isomorphic to $\mathcal L$ and $\pi^*\langle\ ,\, \rangle$ is isomorphic to
$\langle\ ,\, \rangle^{\mathcal L}$. If the reduced line bundle ${L}\to Q$
admits a Hermitian connection $\nabla$, then the compatibility condition that
we require between $\nabla^{\mathcal L}$ and ${\nabla}$ is that
$\nabla^{\mathcal L}$ and $\pi^*\nabla$ must define the same space of polarized
sections.

Using the above definition of reducibility we are able to prove one of the main
results of our paper.

\textbf{Theorem.} {\itshape A quantum structure $(\mathcal L,\langle\ ,\,
  \rangle^{\mathcal L},\nabla^{\mathcal L},P)$ is reducible over a submersion
  $\pi\colon M\to Q$ if and only if $\mathcal L$ admits a relative connection
  $\widehat{\nabla}$ along the fibres of $\pi$ that has trivial holonomy
  groups. In this case $P$ turns out to be isomorphic to $\pi^*T^*Q$. If there
  exists a reduction $(L,\langle\ ,\, \rangle, {\nabla})$ of the quantum
  structure such that $\nabla$ is compatible with $\nabla^{\mathcal L}$ then
  $M$ is endowed with the structure of a Liouville manifold \cite{Va}.}

The above Theorem leads to interesting consequences at the quantum level. In
this paper we develop the Geometric Quantization of Liouville manifolds and we
give a complete description of quantizable observables
(Theorem~\ref{Theo:charact-quantizable-observ}): they are functions which are
linear with respect to the fibre coordinates of $\pi$. Such a definition makes
is a generalization to Liouville manifolds of what was already known for
natural mechanical systems.

We prove that polarized sections have the form $\psi\colon Q\to {L}\otimes
K^{1/2}_Q$, where $K^{1/2}_Q$ is the square root of the determinant bundle of
$T^*Q\to Q$. Then, we define quantum operators acting on polarized sections
using a more general idea than the usual GQ approach. Namely, instead of
operating on $K^{1/2}_Q$ using Lie derivatives, we allow for an arbitrary
linear connection $\nabla^Q$. The consequences of this definition are quite
interesting in our opinion.

First of all, we have to check that the modified quantum operator satisfy the
usual properties dictated by Physics. We prove the following

\textbf{Theorem~\ref{sec:quant-class-observ}}
\begin{itemize}{\itshape
\item the commutation identity $\widehat{[f_1,f_2]} = 
i\hbar[\hat f_1, \hat f_2]$
  involving quantizable observables $f_i$ and the corresponding quantum
  operators $\hat f_i$ holds if and only if $\nabla^Q$ is a flat
  connection;
\item the quantum operator $\hat f$ is symmetric if and only if the gradient of
  $f$ with respect to the fibres of $\pi$ is a divergence-free vector field.}
\end{itemize}
The flatness of $\nabla^Q$ is really a very mild restriction since the
determinant bundle $K^{1/2}_Q$ is trivial when $Q$ is orientable. In
particular, when quantizing natural mechanical systems using the vertical
polarization it is possible to act on half-forms via $\nabla^g$ instead of the
Lie derivative. Indeed, $\nabla^g$ is flat on the determinant bundle. The
divergence-free condition is not really a significant restriction since in the
main physical examples, like the quantization of positions, momenta or the
components of the angular momentum, it always holds. The only change that we
will notice at this stage with respect to GQ is the fact that the divergence
terms in the quantum momentum operators will disappear.

When quantizing observables that do not preserve the polarization via the
BKS method, one might use the parallel transport with respect to the arbitrary
connection $\nabla^Q$ instead of the flow of the Hamiltonian vector field.
In natural mechanical systems with the vertical polarization, the usual quantum
energy operator obtained by applying the standard rules of GQ includes a
multiplicative term of the form $\hbar^2 k r_g$, where $r_g$ is the scalar
curvature of $g$, and $k$ is a constant. If we act on half-forms via $\nabla^g$
instead of the Lie derivative, we have an interesting consequence.

\textbf{Theorem~\ref{sec:modif-geom-quant}.}  {\itshape The quantization of a
  particle moving in a Riemannian manifold using the above modified GQ
  procedure acting on half-forms by parallel transport with respect to
  the Levi-Civita connection $\nabla^g$ yields the quantum energy operator
  \begin{displaymath}
    \hat{H} = - \frac{\hbar^2}{2}\Delta + V.
  \end{displaymath}
Therefore, the constant term in front of the scalar curvature vanishes, $k=0$.}

The scalar curvature term appeared for the first time in \cite{dew} with value
$k=1/12$ in the context of canonical quantization, which is GQ in Darboux
coordinates. The author also stated that he got the same result with Feynman
path integral; the proof was published later \cite{Che} (with $k=1/6$) (see
also~\cite{BAL73}, with $k=-1/12$ and \cite{Mos94} with $k=1/8$, although in a
slightly different context). Weyl quantization also shows a degree of ambiguity
with $k=1/12$, $k=1/8$ or $k=1/4$ depending on ordering problems in \cite{LQ92}
(see also \cite{Lan98}). In a recent paper \cite{BAHRR} the Darboux III
oscillator is considered, and it is found that maximal superintegrability is
achieved if the Laplacian is supplemented with the conformal term
$(\hbar^2(n-2)/(8(n-1)))r_g$, where $n=\dim Q$. In GQ the factor is $k=1/12$
\cite{Sni80,Woo92,Wu}.

The ambiguity in the factor $k$ in front of $r_g$ was already noted by De Witt:
\begin{quote}
  The choice of a numerical factor to stand in front of $\hbar^2r_g$ is
  undetermined; all choices lead to the same classical theory in the limit
  $\hbar\to 0$. This, however, is the only ambiguity in the quantum
  Hamiltonian. \cite[p.\ 395]{dew}.
\end{quote}

Another geometric approach to quantum mechanics, namely Covariant Quantum
Mechanics \cite{CanJadMod95,JadMod92}, provides further insight into the
problem. The approach aims at a covariant formulation of classical and quantum
time-dependent particle mechanics by means of a Lagrangian formulation, with
the vertical polarization as a `fixed' representation. In that approach one
uses half-form valued wave functions and quantum operators act on half-forms
through the Levi-Civita covariant derivative. The Schr\"odinger operator is
derived through a Lagrangian approach. The scalar curvature appears in one of
the summands of the Lagrangian. It was proved in \cite{JanMod02} that there is
a unique covariant Schr\"odinger operator up to the multiplicative term
$\hbar^2 k r_g$ in which $k$ cannot be determined by covariance arguments. This
is obviously compatible with our results.

After the above considerations it is natural to ask if there is any evidence
for the scalar curvature term in the energy spectrum of a physical system. The
answer is that to our best knowledge the only examples with a nonzero scalar
curvature have \emph{constant} scalar curvature. Here we mention the rigid
body, either free or subject to a rotationally invariant magnetic field
\cite{ModTejVit08a,ModTejVit08b,Tej01}, and the Landau problem leading to the
Hall effect \cite{lopez3,LopTej01,Tej06-1,Tej06,KMMW}. Needless to say, the
effect of constant scalar curvature on the energy spectrum is a non-measurable
overall constant shift. Indeed, it is quite a difficult task to find a
classical system which would pass all the requirements and topological
obstructions for GQ and still has a nonzero and nonconstant scalar curvature.

\emph{It is our opinion that the only way to decide weather the scalar curvature
  is present or not in the energy operator, and eventually its numerical
  factor, is by means of an experiment.}

It is at the moment unclear if such an experiment can be performed on a
microscopic scale or if any evidence can be inferred from \emph{eg} cosmic
observations. However, we stress that a major argument in favour of $k=0$ (or
no scalar curvature term in the quantum energy operator) and eventually our
modified geometric quantization is the following one.

As the dynamics of a free particle is only ruled by the metric and the topology
of the Riemannian manifold $(Q,g)$ the classical trajectories are
geodesics. The natural way to lift trajectories to the tangent and cotangent
bundle is parallel transport. \emph{So, it is more likely, in our opinion, to
  expect that the quantization procedure should be developed using parallel
  transport of volumes rather than Lie derivative with respect to Hamiltonian
  vector fields.}

The plan of the paper is the following. In section 2, after some mathematical
preliminaries on the theory of connections, we give a summary of those aspects
of GQ which will be analyzed later. Section 3 is devoted to a detailed
description of the modified GQ. In section 4, we consider systems whose
configuration space is a Riemannian manifold and apply to them the modified GQ
method in the Schr\"odinger representation. A final discussion with future
research perspectives will close the paper.

All differentiable manifolds considered in this paper are assumed to be
Hausdorff and paracompact and maps between them are $\cin$. Submersions are
always assumed to be surjective.

\section{Preliminaries}

\subsection{Connections and half-forms}
\label{sec:math-prel}

We start by recalling some mathematical results that will be used throughout
the paper. Given a manifold $M$ we denote by $\tau_M\colon TM\to M$, the
tangent projection and by $\tau^\vee_M\colon T^*M\to M$ the cotangent
projection.

If $p\colon P\to M$ is a vector bundle and $\nabla$ is a linear connection on
it, then there is a natural splitting $TP=H^\nabla(P)\oplus VP$, where $VP=\ker
Tp$ is the \emph{vertical bundle}, or the space of vectors tangent to the
fibres of $p$ and $H^\nabla(P)$ is the \emph{horizontal bundle}, whose fibres
are isomorphic to the fibres of $TM$. Given a vector field $X\colon P\to TP$,
the connection allows us to compute its components with respect to the above
direct sum splitting. If $(x^\alpha)$ are coordinates on $M$, $(b_k)$ is a
local basis of $P$ with dual coordinates $( y^k)$,
$\pd{}{x^\alpha}=\partial_\alpha$ is a local basis of $TM$ such that
$\nabla_{\partial_\alpha}b_j = \Gamma^k_{\alpha j}b_k$ and
$X=X^\alpha\partial_\alpha + X^kb_k$, then the coordinate expression of the
horizontal part $h^\nabla(X)$, or \emph{horizontalization of $X$} is
\begin{equation}
  \label{eq:3}
  h^\nabla(X) = X^\alpha(\partial_\alpha - \Gamma^k_{\alpha j} y^j b_k),
\end{equation}
where $X^\alpha\in\mathcal{C}^\infty(P)$. One can also lift a vector field
$X\colon M\to TM$ to a horizontal vector field $h^\nabla( X)$ that projects
onto $X$. The coordinate expression will be the same, this time with
$X^\alpha\in\mathcal{C}^\infty(M)$. This theory is well-established in
differential geometry and can be found, for instance, in \cite{KolMicSlo93}.

It is well known that we can define the Lie derivative of a section $\xi$ of
the vector bundle $p\colon P\to M$ with respect to any projectable vector field
$X$ on $P$ and this gives a new section $L_X \xi$ of $P$. If $X$ is a vector
field on $M$ that has a natural lift $X^\ell$ to a vector field on $P$, then we
denote simply by $L_X\xi$ the Lie derivative of $\xi$ with respect to
$X^\ell$. There is a natural relationship between covariant and Lie derivatives
induced by horizontalization. More precisely, if $X\colon P\to TP$ is a vector
field that projects onto a vector field ${X} \colon M\to TM$, then one
has \cite[p.\ 376]{KolMicSlo93}
\begin{equation}
  \label{eq:10}
  L_{h^\nabla(X)}\xi = \nabla_{{X}} \xi.
\end{equation}
In particular, if we start from the horizontalization of a vector field on the
base space ${X}\colon M\to TM$, then we obtain
$L_{h^\nabla({X})}\xi = \nabla_{{X}} \xi$.

It is well known that a complex line bundle $K\to M$ whose first Chern class
reduced modulo 2 vanishes, $0=[c_1(K)]\in H^2(M,\Z/2)$, admits a square root
bundle $\sqrt{K}\to M$ which is not necessarily unique in general.  The Lie
derivative of a section of this bundle with respect to a vector field $X$ on
$M$ that has a natural lift $X^\ell$ to $K$ can be easily defined by requiring
that Leibniz's rule holds. That is, if $\sqrt{\nu}\colon M\to K$ is a section
such that $\sqrt{\nu}\otimes\sqrt{\nu} = \nu\colon M\to K$ then
\begin{equation}
  \label{eq:7}
  L_X\sqrt{\nu}\otimes \sqrt{\nu} + \sqrt{\nu}\otimes L_X\sqrt{\nu}
  = L_X\nu.
\end{equation}
In coordinates if $\sqrt{\nu}=f\sqrt{b}$, where $f\in\mathcal{C}^\infty(M)$ and
$\sqrt{b}$ is a local basis of $\sqrt{K}$, we have
\begin{equation}
  \label{eq:6}
  L_X\sqrt{\nu} = \left(L_Xf + \frac{1}{2}(L_Xb)_0\right)\sqrt{b},
\end{equation}
where $(L_Xb)_0\in\mathcal{C}^\infty(M)$ is defined by $L_Xb = (L_Xb)_0b$.  For
instance, if $M$ is oriented, endowed with a Riemannian metric $g$ and if
$K=\wedge^n_\C T^*M$ is the complexification of the bundle of volume forms,
then there exists a unique trivial square root bundle $\sqrt{\wedge^n_\C
  T^*M}\to M$ whose sections are called \emph{half-forms}. There exists a
distinguished nowhere-vanishing half-form $\sqrt{\nu_g}$ whose square is the
oriented Riemannian volume element $\nu_g$. Since every vector field $X$ on $M$
has a natural lift to the bundle of volume elements, one can define the Lie
derivative of any half-form with respect to it and one gets
$L_X\sqrt{\nu_g}=\frac{1}{2}\dive_g(X)\sqrt{\nu_g}$, where $\dive_g(X)$ is the
divergence of $X$ with respect to the Riemannian metric $g$.  If $(x^\alpha)$
are oriented coordinates on $M$ then $\sqrt{\nu_g} =
\sqrt[4]{|g|}\sqrt{dx^1\wedge\cdots\wedge dx^n}$ and
\begin{equation}
  \label{eq:12}
  L_X\sqrt{\nu_g} =
  \frac{1}{2}\frac{\partial_\alpha(X^\alpha\sqrt{|g|})}{\sqrt{|g|}}\sqrt{\nu_g}.
\end{equation}

The same procedure can be used to define a linear connection in $\sqrt{K}\to M$
from a linear connection $\nabla$ in $K\to M$. Indeed,
\begin{equation}
  \label{eq:13}
  \nabla_X\sqrt{\nu}\otimes \sqrt{\nu} +
  \sqrt{\nu}\otimes \nabla_X\sqrt{\nu} = \nabla_X\nu.
\end{equation}
This implies $\nabla_X\sqrt{\nu} = \left(L_Xf +
  \frac{1}{2}(\nabla_Xb)_0\right)\sqrt{b}$ with an obvious meaning of symbols.
In particular the Levi-Civita connection $\nabla$ on the bundle $\wedge^n_\C
T^*M$ of volume elements induces a connection on the bundle of half-forms. If
we take $b=\sqrt{dx^1\wedge\cdots\wedge dx^n}$ then we have
\begin{equation}
  \label{eq:14}
  \nabla_X\sqrt{\nu} = \left(L_Xf -
    \frac{1}{2}X^\alpha\Gamma^\beta_{\alpha \beta}f\right)
  \sqrt{dx^1\wedge\cdots\wedge dx^n}.
\end{equation}

\begin{Remark}\label{sec:conn-half-forms}
  Since the oriented Riemannian volume element $\nu_g$ is parallel for the
  Levi-Civita connection, $\nabla\nu_g=0$, it follows that we also have
  $\nabla\sqrt{\nu_g} = 0$. This means that both $\wedge^n_\C T^*M$, endowed
  with the Levi-Civita connection, and its square root $\sqrt{\wedge^n_\C
    T^*M}$, with the induced connection defined above, are flat line bundles.
\end{Remark}

We recall that if $F\to Q$ is a vector bundle and $\pi\colon M\to Q$ is a
submersion then, given a connection $\olnabla$ on $F\to Q$, the pull-back
connection $\pi^*\olnabla$ is defined by
\begin{equation}
  \pi^*\olnabla_X\pi^*s = \pi^*(\olnabla_{\bar X}s)\label{eq:26}
\end{equation}
where $s\colon Q \to F$ is a section of $F$ and $X\colon M\to TM$ is a vector
field that projects onto the vector field ${X}\colon Q\to TQ$. The Christoffel
symbols of $\pi^*\olnabla$ in the direction of $VM=\ker T\pi$ are zero.

On the other hand the pull-back bundle $\pi^*F\to M$ is endowed with a natural
\emph{partial connection}, or \emph{$\pi$-relative connection}
$\widehat\nabla^\pi$ in the direction of $VM \subset TM$, which is
defined as $\widehat\nabla^\pi_V(f\,\pi^*s)=Vf\cdot \pi^*s$ for any $f\in
\cin(M)$, $s\in\Gamma(Q,F)$ and any vertical vector field $V$. Obviously this
partial connection is flat.

Any connection $\nabla$ on a vector bundle $p\colon E\to M$ induces by
restriction to the vertical bundle $V\pi$ a $\pi$-relative connection that we
denote $\pi_{M/Q}(\nabla)$. A straightforward computation proves the following
result.

\begin{Lemma}\label{Lem:canonical-relative-connection-on pullbacks} Let
  $\pi\colon M\to Q$ be a submersion and let $q\colon F\to Q$ be a vector
  bundle. For any connection $\overline\nabla$ on $F$, the pullback connection
  $\pi^*(\overline\nabla)$ on the pullback vector bundle $\pi^* F\to M$ induces
  the natural $\pi$-relative connection $\widehat\nabla^\pi$ on $\pi^* F$. That
  is
  \[
  \pi_{M/Q}(\pi^*(\overline\nabla))=\widehat\nabla^\pi.
  \]
\end{Lemma}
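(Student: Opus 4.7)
The plan is to prove the equality of two $\pi$-relative connections on the pullback bundle $\pi^*F$ by checking that they coincide on a local generating family of sections, then extending by Leibniz in the section argument. By construction both $\pi_{M/Q}(\pi^*\olnabla)$ and $\widehat\nabla^\pi$ take as input a vertical vector field $V\in\Gamma(V\pi)$ together with a section of $\pi^*F$, and both satisfy the Leibniz rule under multiplication by functions in $\cin(M)$. Hence it suffices to evaluate them on sections of the form $\pi^*s$ with $s$ a section of $F$, because locally every section of $\pi^*F$ is a $\cin(M)$-linear combination of pullbacks of frame sections; more precisely, if $(s_i)$ is a local frame of $F$ over an open set $U\subset Q$, then $(\pi^*s_i)$ is a local frame of $\pi^*F$ over $\pi^{-1}(U)$.

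The crucial observation is that a vertical vector field $V$ on $M$ satisfies $T\pi\circ V=0$, so it projects to the zero vector field on $Q$. Applying the defining formula \eqref{eq:26} of the pullback connection with $\bar V=0$ yields
\[
(\pi^*\olnabla)_V(\pi^*s)=\pi^*(\olnabla_{0}s)=0.
\]
On the other hand, the defining formula of the natural relative connection gives $\widehat\nabla^\pi_V(\pi^*s)=V(1)\,\pi^*s=0$. Thus the two relative connections agree on pullback sections, and applying the Leibniz rule on a general local section $f\,\pi^*s$ shows that both operators yield $V(f)\,\pi^*s$. By $\cin(M)$-linearity in the vector field argument the equality then extends to every vertical vector field and every section of $\pi^*F$.

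There is essentially no obstacle in the argument: it reduces to the tautological fact that $T\pi$ annihilates vertical vector fields, so the pullback connection differentiates pullback sections trivially in vertical directions. The only minor subtlety is that a general vertical vector field need not be the lift of a globally defined vector field on $Q$; this is handled by the tensoriality of a connection in its vector field argument, or equivalently by working in adapted fibred coordinates $(x^\alpha,y^a)$ on $M$, in which the Christoffel symbols of $\pi^*\olnabla$ in the vertical directions $\partial/\partial y^a$ vanish identically, which is precisely the coordinate expression of the natural relative connection $\widehat\nabla^\pi$.
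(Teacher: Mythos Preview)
Your proof is correct and is precisely the ``straightforward computation'' the paper alludes to in lieu of a proof: the paper itself notes just before the lemma that the Christoffel symbols of $\pi^*\olnabla$ in the direction of $VM$ vanish, which is exactly the coordinate version of your argument that $(\pi^*\olnabla)_V(\pi^*s)=\pi^*(\olnabla_0 s)=0$ for vertical $V$.
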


\subsection{Geometric quantization}
\label{sec:prel-geom-quant}

In this section we recall the basic ingredients of GQ.  Our main sources
are~\cite{Sni80,Woo92}. We will focus on the aspects that will play a key role
for the results of the present paper. A \emph{prequantum structure} on a
symplectic manifold $(M,\omega)$ is a triple $\mathcal Q=(\mathcal L,\langle\,
, \rangle^{\mathcal L},\nabla^{\mathcal L})$, where $ \mathcal L\to M$ is a
complex line bundle endowed with a Hermitian metric $\langle\, ,
\rangle^{\mathcal L}$ and a Hermitian connection $\nabla^{\mathcal L}$ such
that
\begin{displaymath}
  R[\nabla^{\mathcal L}]=-i\frac{\omega}{\hbar}\Id_{\mathcal L},
\end{displaymath}
where $R[\nabla^{\mathcal L}]$ is the curvature of $\nabla^{\mathcal L}$ and
$\hbar=\frac{h}{2\pi}$ is the reduced Planck constant.  It can be proved that
the symplectic manifold $(M,\omega)$ admits a prequantum structure if and only
if the cohomology class $[\frac{\omega}{h}]$ is integral, that is
\begin{displaymath}
  [\frac{\omega}{h}]\in i(H^{2}(M,{\mathbb{Z}}))\subset
  H^{2}(M,{\mathbb{R}}),
\end{displaymath}
where $i$ is the map induced in cohomology by the inclusion
$\mathbb{Z}\hookrightarrow \mathbb{R}$. If the above condition is fulfilled,
then $(M,\omega)$ is said to be \emph{quantizable}.

One says that two Hermitian line bundles $(L, {\langle\, , \rangle})$,
$(L^\prime, {\langle\, , \rangle}^\prime)$ (Hermitian line bundles with
connection $(L, {\langle\, , \rangle}, \nabla)$, $(L^\prime, {\langle\, ,
  \rangle}^\prime, \nabla^\prime)$) are equivalent Hermitian bundles (with
connection) if there exists an isomorphism of line bundles $\phi\colon L\to
L^\prime$ such that $\phi^*{\langle\, , \rangle}^\prime={\langle\, , \rangle}$
(and $\phi^*\nabla'=\nabla$).  The set of equivalence classes of prequantum
structures is parametrized by $H^{1}(M,U(1))\simeq \Hom(\pi_1(M),U(1))$.
Therefore, a symplectic manifold $M$ can admit non-equivalent
quantizations only if it is not simply connected.

\begin{Definition}
  We say that a prequantization structure $(\mathcal L, {\langle\, ,
    \rangle}^{\mathcal L}, \nabla^{\mathcal L})$ on $(M,\omega)$ is
  \emph{reducible} over a submersion $\pi\colon M\to Q$ if there exists a
  Hermitian line bundle $(L,\nabla)$ on $Q$ such that $(\mathcal L, {\langle\,
    , \rangle}^{\mathcal L}) $ is equivalent to $(\pi^*{L}, \pi^*\langle\, ,
  \rangle{})$. In this case we say that $(L, {\langle\, , \rangle}, \nabla)$ is
  \emph{a reduction} of the prequantization structure.\end{Definition} Let us
stress that $\nabla^{\mathcal L}$ cannot be equivalent to $\pi^*{\nabla}$, for
a Hermitian connection $\nabla$ on ${L}\to Q$, since $i\hbar \pi^*R[{\nabla}]$
is not a symplectic form because the vertical vector fields of the submersion
$\pi\colon M\to Q$ belong to its radical. We have the following result.

\begin{Proposition}\label{Prop:red-quant-struct} If a prequantization structure
  $(\mathcal L, {\langle\, , \rangle}^{\mathcal L}, \nabla^{\mathcal L})$ on
  $(M,\omega)$ is reducible over a submersion $\pi\colon M\to Q$ with reduction
  $( L, \langle\, , \rangle{}, \nabla)$, then there exists a $1$-form
  $\alpha\in\Omega^1(M)$ such that
  \begin{equation}
    \omega=\pi^*\bar\omega+d\alpha,\label{eq:31}
  \end{equation}
  where $\bar\omega=i\hbar R[\nabla]\in\Omega^2(Q)$. Moreover, $d\alpha$ is a
  symplectic structure on $M$.
\end{Proposition}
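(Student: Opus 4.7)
The plan is to exploit the reducibility to identify the bundles and recast the statement as a comparison of two Hermitian connections on a single Hermitian line bundle. First I would fix a unitary isomorphism $\phi\colon\mathcal L\to\pi^*L$ realising the equivalence of $(\mathcal L,\langle\,,\,\rangle^{\mathcal L})$ with $(\pi^*L,\pi^*\langle\,,\,\rangle)$ and transport $\nabla^{\mathcal L}$ across it; abusing notation I then have two Hermitian connections $\nabla^{\mathcal L}$ and $\pi^*\nabla$ on the single Hermitian line bundle $\pi^*L$. Since on a complex line bundle the difference of two Hermitian connections is a purely imaginary $1$-form times the identity, there exists a real $\alpha\in\Omega^1(M)$ such that
\[
\nabla^{\mathcal L}=\pi^*\nabla-\tfrac{i}{\hbar}\alpha\,\Id_{\mathcal L}.
\]

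Next I would compute curvatures. For a line bundle the formula $R[\nabla+\beta\,\Id]=R[\nabla]+d\beta$, combined with naturality $R[\pi^*\nabla]=\pi^*R[\nabla]$, yields
\[
R[\nabla^{\mathcal L}]=\pi^*R[\nabla]-\tfrac{i}{\hbar}d\alpha.
\]
Inserting $R[\nabla^{\mathcal L}]=-\tfrac{i}{\hbar}\omega\,\Id_{\mathcal L}$ and $\pi^*R[\nabla]=-\tfrac{i}{\hbar}\pi^*\bar\omega$ (using $\bar\omega=i\hbar R[\nabla]$) and cancelling the common scalar factor gives the desired identity $\omega=\pi^*\bar\omega+d\alpha$.

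The form $d\alpha$ is obviously closed, so proving it is symplectic reduces to non-degeneracy; this is where I expect the main difficulty. My approach uses that the radical of $\pi^*\bar\omega$ at every point contains $V\pi=\ker T\pi$, so $\iota_W\pi^*\bar\omega=0$ for any vertical $W$. Assume $v\in\ker(d\alpha)_m$; contracting the identity $\omega=\pi^*\bar\omega+d\alpha$ with $v$ gives $\iota_v\omega=\iota_v\pi^*\bar\omega$. Pairing with any $W\in V\pi_m$ then yields $\omega(v,W)=\bar\omega(T_m\pi(v),T_m\pi(W))=0$, so $v\in(V\pi_m)^{\perp_\omega}$.

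To finish I would invoke the Lagrangian character of $V\pi$ in $(M,\omega)$ — implicit in the GQ setting, where the fibres of a reducing submersion carry the real polarization $P$ and are Lagrangian — giving $(V\pi)^{\perp_\omega}=V\pi$. Hence $v$ is itself vertical, which forces $\iota_v\pi^*\bar\omega=0$ and therefore $\iota_v\omega=0$, whence $v=0$ by non-degeneracy of $\omega$. The main obstacle is justifying this Lagrangian property from the reducibility data alone (or imposing it explicitly), since without it the argument only places $\ker d\alpha$ inside $(V\pi)^{\perp_\omega}$ rather than collapsing it to $\{0\}$; this is the step on which I would concentrate the careful part of the write-up.
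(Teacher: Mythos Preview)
Your derivation of $\alpha$ and of the identity $\omega=\pi^*\bar\omega+d\alpha$ is exactly the paper's argument: identify $\mathcal L$ with $\pi^*L$, write $\nabla^{\mathcal L}-\pi^*\nabla=-\tfrac{i}{\hbar}\alpha$, and take curvatures.

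For the nondegeneracy of $d\alpha$ the paper says only ``Since $d\alpha=\omega-\pi^*\bar\omega$, and $\omega$ is symplectic, it follows immediately that $d\alpha$ is a symplectic form'' and gives no further detail. Your suspicion that something is missing here is justified: as stated (before polarizations are introduced), the conclusion can fail. For instance, on $M=\mathbb{R}^4$ with the standard $\omega=dp_1\wedge dq^1+dp_2\wedge dq^2$, take $Q=\mathbb{R}^2$, $\pi(q^1,q^2,p_1,p_2)=(q^1,p_1)$, trivial $L$ with $\nabla=d-\tfrac{i}{\hbar}\,p_1\,dq^1$; then $\bar\omega=dp_1\wedge dq^1$ and $d\alpha=dp_2\wedge dq^2$, which is degenerate on $M$. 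The fibres here are symplectic rather than Lagrangian, which is precisely the obstruction you anticipated.

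So the paper is tacitly relying on the setting developed immediately afterwards, where $P=V\pi$ is a Lagrangian polarization. Under that hypothesis your chain
\[
\ker(d\alpha)_m\ \subset\ (V\pi_m)^{\perp_\omega}\ =\ V\pi_m\ \subset\ \operatorname{rad}(\pi^*\bar\omega)_m,
\]
hence $\iota_v\omega=\iota_v\pi^*\bar\omega=0$ and $v=0$, is the clean and correct way to close the gap. In short: your approach coincides with the paper's for the first claim, and for the second you have supplied the argument (and the needed hypothesis) that the paper's ``immediately'' leaves implicit.
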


\begin{proof} Since $\mathcal L\simeq\pi^*L$ one has
  $\nabla^{\mathcal L}-\pi^*\nabla=-\frac{i}{\hbar}\alpha$ for a certain $1$-form
  $\alpha\in\Omega(M)$. The first claim follows by computing the curvatures of
  these connections bearing in mind the quantization condition. Since $d\alpha
  =\omega-\pi^*\bar\omega$, and $\omega$ is symplectic, it follows immediately
  that $d\alpha$ is a symplectic form.
\end{proof}

Note that when a symplectic manifold $(M,\omega)$ admits a prequantization
structure that is reducible over a submersion $\pi\colon M\to Q$, then $M$ is
not compact. Indeed, by Proposition \ref{Prop:red-quant-struct}, there exists a
$1$-form $\alpha\in\Omega^1(M)$ such that $(M,d\alpha)$ is a symplectic
manifold, and it is well known that an exact symplectic manifold is always not
compact.

One has the following descent result, whose proof can be found in \cite{tv16}.

\begin{Theorem}\label{Theo:descent-absolute-parallelism} Let $\pi\colon M\to Q$
  be a surjective submersion and consider a vector bundle $p\colon E\to M$ that
  admits an absolute parallelism $\widehat\nabla$ relative to $\pi$. Then,
  there exists a vector bundle $q\colon F\to Q$, unique up to isomorphism over
  $N$, and a vector bundle isomorphism $\varphi\colon
  E\xrightarrow{\sim}\pi^*F$ that transforms $\widehat\nabla$ into the natural
  flat relative connection $\widehat\nabla^\pi$ defined on the pullback vector
  bundle $\pi^*F\to M$.
\end{Theorem}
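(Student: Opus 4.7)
The plan is to construct $F$ fibre-wise as the sheaf of $\widehat\nabla$-parallel sections along fibres of $\pi$, and then show that this sheaf is locally free. For each $q\in Q$ let $M_q=\pi^{-1}(q)$ and define
\[
  F_q=\{\,\sigma\colon M_q\to E|_{M_q}\mid \widehat\nabla\sigma=0\,\}.
\]
Since $\widehat\nabla$ is a relative connection with trivial holonomy, and since $\pi$ being a submersion implies each fibre is a (Hausdorff, paracompact) submanifold whose connected components we may assume to be connected, parallel transport in $M_q$ is path-independent. Hence evaluation at any point $m\in M_q$ gives a linear isomorphism $\mathrm{ev}_m\colon F_q\xrightarrow{\sim} E_m$, and in particular $\dim F_q=\mathrm{rank}\,E$.

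Next I would put a smooth vector bundle structure on $F:=\bigsqcup_{q\in Q}F_q$ using local sections of $\pi$. Around any $q_0\in Q$, because $\pi$ is a submersion, there is a local section $s\colon U\to M$ with $s(q_0)=m_0$. The map $F_q\ni\sigma\mapsto\sigma(s(q))\in E_{s(q)}$ gives a fibrewise bijection $F|_U\xrightarrow{\sim}s^*E$, and I would use this to transfer the vector bundle chart of $s^*E$ over $U$ to $F|_U$. Compatibility of two such charts $(s_1,U_1)$ and $(s_2,U_2)$ amounts to the fact that, for $q\in U_1\cap U_2$, the transition from $\sigma(s_1(q))$ to $\sigma(s_2(q))$ is given by relative parallel transport in $M_q$ along any curve joining $s_1(q)$ to $s_2(q)$; smoothness of the transition in $q$ follows from the smooth dependence on parameters of solutions of the parallel transport ODEs, which may be treated uniformly by choosing a smooth $1$-parameter family of such curves (e.g.\ using a local trivialization of $\pi$).

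Once $F\to Q$ is a smooth vector bundle, I would define $\varphi\colon E\to\pi^*F$ at $m\in M$ by sending $e\in E_m$ to $(m,\sigma_e)\in M\times_Q F$, where $\sigma_e\in F_{\pi(m)}$ is the unique relatively parallel section with $\sigma_e(m)=e$. Fibrewise $\varphi$ is the inverse of the evaluation $\mathrm{ev}_m$, hence is a linear isomorphism; in the local charts above, $\varphi$ reads simply as $e\mapsto(m,\mathrm{ev}_{s(\pi(m))}^{-1}e)$, which is smooth. To see that $\varphi$ intertwines $\widehat\nabla$ with $\widehat\nabla^\pi$, observe that the $\widehat\nabla^\pi$-parallel sections of $\pi^*F$ are exactly the pullbacks $\pi^*\tau$ for $\tau\in\Gamma(Q,F)$, and by construction these correspond under $\varphi$ to the $\widehat\nabla$-parallel sections of $E$ along each fibre; since both relative connections are determined by their kernels (being flat of maximal rank in the vertical direction), this identifies them.

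For uniqueness, if $(F',\varphi')$ is another such pair, then $\psi:=\varphi'\circ\varphi^{-1}\colon\pi^*F\to\pi^*F'$ is a vector bundle isomorphism sending $\widehat\nabla^\pi$ to $\widehat\nabla^\pi$, hence taking pullback sections to pullback sections; evaluating along a local section of $\pi$ produces the desired isomorphism $F\to F'$. The main technical obstacle in the whole argument is the smoothness of the vector bundle structure on $F$, i.e.\ the smooth dependence of the fibrewise parallel transport on the base point $q\in Q$; this is where one must use that $\pi$ is a submersion (to trivialize it locally and so interpret parallel transport in the fibres as a smooth family of ODE flows) together with the hypothesis that $\widehat\nabla$ is a genuinely smooth relative connection.
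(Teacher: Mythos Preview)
The paper does not actually prove this theorem: immediately before the statement it says ``One has the following descent result, whose proof can be found in \cite{tv16},'' and that reference is listed as \emph{in preparation}. There is therefore no in-paper argument to compare your proposal against.

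That said, your approach---defining $F_q$ as the space of $\widehat\nabla$-parallel sections of $E$ over the fibre $M_q$, transporting a smooth vector-bundle structure onto $F=\bigsqcup_q F_q$ via local sections of $\pi$, and building $\varphi$ from the inverse of the evaluation maps---is the standard route to such descent statements and is essentially correct. The identification of $\widehat\nabla$ with $\widehat\nabla^\pi$ via ``both flat relative connections are determined by their parallel sections'' is the right way to finish, and your uniqueness argument is fine.

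One point deserves more care. You write that the fibres have ``connected components we may assume to be connected,'' which is tautological and sidesteps the real issue: if a fibre $M_q$ is disconnected, then trivial holonomy alone does \emph{not} force $\dim F_q=\operatorname{rank}E$, since a parallel section can take independent values on distinct components, and your evaluation map $\mathrm{ev}_m$ need not be an isomorphism. So either ``absolute parallelism relative to $\pi$'' must be read as something stronger than vanishing of the fibrewise holonomy groups (e.g.\ the existence of a global parallel frame along each fibre), or connectedness of the fibres is an implicit standing hypothesis. In the paper's applications (the fibres of $T^*Q\to Q$, which are vector spaces) this is harmless, but you should state the assumption explicitly rather than gloss over it.
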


This together with a straightforward computation, see \cite{tv16}, gives the
following result

\begin{Corollary}\label{Cor:red-quantum-structure} A prequantization structure
  $(\mathcal L, {\langle\, , \rangle}^{\mathcal L}, \nabla^{\mathcal L})$ on a
  symplectic manifold $(M,\omega)$ is reducible over a submersion $\pi\colon
  M\to Q$ if and only if $\mathcal L$ admits a $\pi$-relative connection
  $\widehat\nabla$ which is an absolute parallelism relative to $\pi$; that is,
  if and only if the holonomy groups of $\widehat\nabla$ along the fibers of
  $\pi$ are trivial.
\end{Corollary}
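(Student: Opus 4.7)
The plan is to prove the two implications separately, relying on Lemma~\ref{Lem:canonical-relative-connection-on pullbacks} for the forward direction and on Theorem~\ref{Theo:descent-absolute-parallelism} for the reverse one. Throughout, the natural relative connection to single out is $\widehat\nabla := \pi_{M/Q}(\nabla^{\mathcal L})$, the restriction of the prequantum connection to the vertical bundle $\ker T\pi$; by construction this relative connection inherits Hermitian compatibility from $\nabla^{\mathcal L}$.

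For the forward direction, I would assume the prequantum structure is reducible, so that there is a Hermitian bundle isomorphism $\varphi\colon (\mathcal L, \langle\,,\,\rangle^{\mathcal L}) \xrightarrow{\sim} (\pi^* L, \pi^*\langle\,,\,\rangle)$. By Lemma~\ref{Lem:canonical-relative-connection-on pullbacks}, the pull-back of any Hermitian connection on $L$ induces the canonical flat $\pi$-relative connection $\widehat\nabla^\pi$ on $\pi^* L$. Transporting $\widehat\nabla^\pi$ back to $\mathcal L$ through $\varphi$ produces a $\pi$-relative connection whose fibrewise parallel sections are exactly the (pull-backs of) constants and whose fibrewise holonomy groups are therefore trivial.

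For the reverse direction, suppose $\mathcal L$ admits a $\pi$-relative connection $\widehat\nabla$ with trivial holonomy along the fibres. Theorem~\ref{Theo:descent-absolute-parallelism}, applied to $\mathcal L\to M$ with the relative connection $\widehat\nabla$, produces a line bundle $L\to Q$ together with a bundle isomorphism $\varphi\colon \mathcal L \xrightarrow{\sim} \pi^* L$ carrying $\widehat\nabla$ to the canonical flat $\widehat\nabla^\pi$. To upgrade this to a reduction of the \emph{Hermitian} prequantum structure I would specialize to $\widehat\nabla = \pi_{M/Q}(\nabla^{\mathcal L})$, which is Hermitian because $\nabla^{\mathcal L}$ is. Parallel transport of such a Hermitian $\widehat\nabla$ along any curve contained in a fibre preserves $\langle\,,\,\rangle^{\mathcal L}$, so in a $\widehat\nabla$-parallel local frame the Hermitian metric is constant along fibres and descends to a well-defined Hermitian metric $\langle\,,\,\rangle$ on $L$ with $\varphi^*\pi^*\langle\,,\,\rangle = \langle\,,\,\rangle^{\mathcal L}$.

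In my view, the only genuine subtlety is the Hermitian descent: Theorem~\ref{Theo:descent-absolute-parallelism} as stated produces only a reduction of the underlying line bundle, and an arbitrary relative connection with trivial fibrewise holonomy need not preserve $\langle\,,\,\rangle^{\mathcal L}$. The observation that the canonical choice $\pi_{M/Q}(\nabla^{\mathcal L})$ is automatically Hermitian, and thus forces the metric to be fibre-constant, is what closes the argument; the uniqueness clause of Theorem~\ref{Theo:descent-absolute-parallelism} then shows that the line bundle $L$ produced by any admissible $\widehat\nabla$ agrees, up to isomorphism, with the one produced by this canonical choice. The ``straightforward computation'' alluded to in the statement is presumably precisely this parallel-transport argument, which is the main step beyond citing the descent theorem.
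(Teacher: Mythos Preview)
Your overall architecture matches the paper's: the forward direction exhibits the canonical $\widehat\nabla^\pi$ on $\pi^*L$ (via Lemma~\ref{Lem:canonical-relative-connection-on pullbacks}), and the reverse direction invokes Theorem~\ref{Theo:descent-absolute-parallelism}. The place where your argument breaks is the Hermitian descent in the reverse direction. You propose to ``specialize to $\widehat\nabla = \pi_{M/Q}(\nabla^{\mathcal L})$'' and use its Hermitian compatibility to force the metric to be fibrewise constant. But the hypothesis of the reverse implication only says that \emph{some} $\pi$-relative connection is an absolute parallelism; it does not say that $\pi_{M/Q}(\nabla^{\mathcal L})$ is one, and in general it is not. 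Nothing in Corollary~\ref{Cor:red-quantum-structure} assumes the fibres of $\pi$ are isotropic for $\omega$, so the fibrewise curvature of $\pi_{M/Q}(\nabla^{\mathcal L})$, namely $-\tfrac{i}{\hbar}\,\omega|_{V(M/Q)\times V(M/Q)}$, need not vanish. For instance, take $M=\R^4$ with the standard symplectic form, the trivial prequantum line bundle, and $\pi$ the projection onto one canonical pair of coordinates, so that the fibres are themselves symplectic $2$-planes: the prequantum structure is trivially reducible, yet $\pi_{M/Q}(\nabla^{\mathcal L})$ has nontrivial holonomy around loops in the fibres. Your parallel-transport argument for descending $\langle\,,\,\rangle^{\mathcal L}$ is therefore unavailable in the stated generality, and the uniqueness clause you invoke does not help, since it presupposes that the canonical choice is itself admissible.

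The fix is much simpler than the route you chose, and is presumably the ``straightforward computation'' the paper alludes to. Once Theorem~\ref{Theo:descent-absolute-parallelism} yields an isomorphism $\varphi\colon \mathcal L \xrightarrow{\sim} \pi^*L$ of complex line bundles, choose any Hermitian metric $\langle\,,\,\rangle$ on $L$. Then $\varphi^*\pi^*\langle\,,\,\rangle$ and $\langle\,,\,\rangle^{\mathcal L}$ are two Hermitian metrics on the same complex \emph{line} bundle, hence $\varphi^*\pi^*\langle\,,\,\rangle = e^\phi\,\langle\,,\,\rangle^{\mathcal L}$ for some smooth $\phi\colon M\to\R$, and $e^{-\phi/2}\varphi$ is the required Hermitian isomorphism $(\mathcal L,\langle\,,\,\rangle^{\mathcal L})\simeq(\pi^*L,\pi^*\langle\,,\,\rangle)$. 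No relative connection beyond the one already supplied by the hypothesis is needed; the rank-one situation makes Hermitian descent automatic.
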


An example is provided by the cotangent bundle $\tau^\vee_Q\colon T^*Q \to Q$
of a manifold $Q$. Taking into account Corollary
\ref{Cor:red-quantum-structure}, the topological triviality of the fibers of
$\tau^\vee_Q$ implies that every prequantum structure on the symplectic
manifold $(T^*Q,\omega_0=d\theta)$, where $\theta$ is the Liouville form, is
reducible over $\tau^\vee_Q$.

A \emph{real polarization} on a symplectic manifold $(M,\omega)$ is an
involutive Lagrangian distribution $P\subset TM$. A \emph{complex polarization}
on a symplectic manifold $(M,\omega)$ is an involutive Lagrangian distribution
$P$ of the complexified tangent bundle of $M$, $P\subset T_{\mathbb{C}}M$ that
fulfills additional hypothesis \cite{Sni80,Woo92}.

A polarization is needed in order to define the Hilbert space of quantum states
and also to select the class of observables to be quantized.

A \emph{quantization} of a symplectic mechanical system consists of a
prequantum structure $(\mathcal L,\langle\, , \rangle^{\mathcal L},
\nabla^{\mathcal L})$ and a polarization $P$. After such a choice, a Hilbert
space $\cH_P$ can be defined as follows. Sections which are covariantly
constant in the direction of $P$ represent quantum states. In the particular
case $M=T^*Q$, the simplest real polarization is the vertical one,
$P=VT^*Q\simeq T^*Q\times_Q T^*Q$, and the corresponding quantization is called
the \emph{Schr\"odinger representation}.  Notice however that the scalar
product $(\psi_1,\psi_2) = \int_{M}\langle \psi,\psi\rangle
\operatorname{vol}_{\omega}$ of two parallel sections $\psi_1$, $\psi_2$ might
diverge.

The bundle of \emph{half-forms} $N_P^{1/2}=\sqrt{\wedge^n_\C P}\to M$ was
introduced in order to avoid the above problem. Such bundle does only exist
provided that the square of the first Stiefel-Whitney class of $P$ vanishes,
$w_1(P)^2=0$, \cite{Sni80,Woo92}. This amounts to the possibility of making a
consistent choice for the square root of the transition functions of the line
bundle $\wedge^n_\C P\to M$ to allow for consistent coordinate changes in
$N_P^{1/2}$. This is equivalent to the datum of a \emph{metaplectic structure}
for the symplectic manifold $(M,\omega)$
\cite{Sni80,Woo92}.

In order to be able to achieve our goals, it is of great importance the fact
that $N^{1/2}_P$ admits a partial flat connection $\widehat{\nabla}^B$ in the
direction of $P$; $\widehat{\nabla}^B$ is called the \emph{Bott connection}
(see Subsection~\ref{sec:geom-polar}).  Indeed, $P$ admits a local basis
$\xi_1$, \dots, $\xi_n$ formed by Hamiltonian vector fields. Such fields
commute as they take their values in a Lagrangian subspace.  Moreover, it can
be proved \cite{Sni80} that the existence of a metaplectic structure implies
that any two such local bases are connected by a change of coordinates which is
constant along the polarization. This means that if $\nu=
\nu_0\sqrt{\xi_1\wedge\cdots\wedge\xi_n}$, where
$\nu_0\in\mathcal{C}^\infty(M)$, and $Y\colon M\to P$ then
\begin{equation}
  \label{eq:8}
  \widehat{\nabla}^B_Y\nu = (L_Y\nu_0)\sqrt{\xi_1\wedge\cdots\wedge\xi_n},
\end{equation}
is well defined. It can be verified that $\widehat{\nabla}^B$ is flat
\cite{Sni80}.

One can then consider \emph{polarized sections}, that is sections of the
prequantization line bundle twisted by half-forms, $\mathcal L\otimes
N_P^{1/2}\to M$, which are covariantly constant in the direction of $P$ with
respect to the partial connection on the tensor product obtained from
$\nabla^{\mathcal L}$ and $\widehat{\nabla}^B$. Such sections can be integrated
over the space of leaves of the polarization $M/P$. The \emph{Hilbert space
  $\mathcal{H}_P$ of quantum states} is the $L^2$-completion of the space of
square-integrable sections ${\psi}$ of $\mathcal L\otimes N^{1/2}_P$ which are
covariantly constant along $P$.

The space of (straightforwardly) \emph{quantizable observables}
${\cO}_P\subset\mathcal{C}^{\infty}(M)$ consists of functions
$f\in\mathcal{C}^{\infty}(M)$ whose Hamiltonian vector field $X_f$ preserve the
polarization $P$; in other words, $[X_f,P]\subset P$. A choice of polarization
corresponds to what in the Physics literature is called a choice of
\emph{representation}.

At this point we need to define Dirac's correspondence between the space of
quantizable observables $\mathcal{O}_P$ and quantum operators acting on the
Hilbert space $\mathcal H$. Every classical observable
$f\in\mathcal{C}^\infty(M)$ yields a Hamiltonian vector field $X_f$ on
$M$. Such a vector field can always be lifted to a vector field on the
prequantum line bundle $\mathcal L\to M$ that operates on its sections
$\psi\colon M\to \mathcal L$. However, we also need to operate on the factor
$N^{1/2}_P$ and the Lie derivative can be defined only if we use vector fields
that preserve $P$.

Every quantizable observable $f\in{\cO}_P$ corresponds to a symmetric operator
$\hat f\colon \cH_P\to\cH_P$ and in many important cases one can prove that it
is self-adjoint.  The operator $\hat f$ is defined as follows. The Hamiltonian
vector field corresponding to $f$ can be lifted to a complex Hermitian vector
field $X^\ell_f$ on $\mathcal L$ that preserves the connection
$\nabla^{\mathcal L}$, in the sense that $L_{X^\ell_f}\nabla = 0$, regarding
the connection as a tensor field, see \cite{KolMicSlo93}. The expression of
$X^\ell_f$ is
\begin{equation}
  \label{eq:5}
  X^\ell_f = h^\nabla(X_f) + i\frac{f}{\hbar}E,
\end{equation}
where $E$ is the Euler vector field of $\mathcal L$, which is the vertical
vector field such that for every $z\in \mathcal L$ one has $E_z=(z,z)$ via the
natural isomorphism $V\mathcal L=\mathcal L\times_M \mathcal L$.  The vector
field $X^\ell_f$ acts on sections $s\colon M\to \mathcal L$ through its flow
$\phi^{f,\mathcal L}_{t}$ as $\phi^{f,\mathcal L}_{t,*}(s) = \phi^{f,\mathcal
  L}_{ - t} \circ s \circ \phi^{f}_{t}$, where $\phi^{f}_{t}$ is the flow of
$X_f$. The generalized Lie derivative (see \emph{eg} \cite{KolMicSlo93}) of the
section $s$ is then defined via a limit $t\to 0$; one has \cite[p.\
378]{KolMicSlo93}
\begin{equation}
  \label{eq:4}
  i\hbar L_{X^\ell_f}s = i\hbar\nabla_{X_f}s - fs.
\end{equation}

On decomposable sections ${\psi}=s\otimes\sqrt{\nu}\colon M \to \mathcal
L\otimes N_P^{1/2}$ which are covariantly constant along the direction of $P$
(\emph{ie, wave functions}) we can define the action of \emph{quantum
  operators}:
\begin{equation}
  \label{eq:9}
  \hat f(\psi) = i\hbar L_{\tilde{X}_f}{\psi} =
  i\hbar L_{X^\ell_f}s\otimes\sqrt{\nu} +
  i\hbar s\otimes L_{X_f}\sqrt{\nu}.
\end{equation}
Here $\tilde{X}_f$ is the vector field induced on $\mathcal L\otimes N_P^{1/2}$
by $X^\ell_f$ and $X_f$.  Note that the action of $X_f$ on sections of
$N^{1/2}_P\to M$ is well defined since $X_f$ preserves $P$.

One can prove that the well-known relation between classical and quantum
commutators holds:
\begin{equation}
  \label{eq:27}
  [\hat f_1,\hat f_2] = i\hbar \widehat{\{f_1,f_2\}}.
\end{equation}

We stress that not all physically or mathematically interesting classical
observables can be quantized. For instance, in the case of the Schr\"{o}dinger
representation for a natural mechanical system $f\in \mathcal{O}_P$
if and only if $f$ is linear in the momenta, \cite{Sni80,Woo92}. More
intrinsically, if $f\in \mathcal{O}_P$ then one has $f=(\tau_Q^\vee)^*
h+\theta(X^\vee)$, where $h\in\cin(Q)$ and $X^\vee$ is the cotangent lift of a
vector field $X$ on $Q$. This implies that the Hamiltonian, which is the total
energy function $H=K_g+V$, where $K_g(q^i,p_i)=\frac{1}{2}g^{ij}p_ip_j$ and
$V\in\mathcal{C}^{\infty}(Q)$, is not quantizable.

The quantization of energy is usually achieved in the framework of the BKS
theory of quantization of observables that do not preserve the polarization. If
$H$ is such an observable, we denote by $\phi^H_t$ the flow of $X_H$; we have
$\phi^H_{t,*}(P)\neq P$. We assume that for $0 < t< \epsilon$ the two
polarizations $\phi^H_{t,*}(P)$ and $P$ are \emph{strongly admissible}: this
means that they must be transverse and that some technical hypothesis have to
be verified \cite{Sni80}. Let us denote by $\cH_t=\cH_{\phi^H_{t,*}(P)}$ the
Hilbert space of quantum observables corresponding to the polarization
$\phi^H_{t,*}(P)$, in particular one has $\cH_0=\cH_P$. Then a map
$\cH_{0t}\colon\cH_t \to \cH_0$, the BKS kernel, is defined for every
${\psi}_2\in\cH_t$ by
\begin{equation}
  ({\psi}_1,\cH_{0t}({\psi}_2))=
  \int_{M/P}\langle {\psi}_1,{\psi}_2 \rangle
  \quad\forall{\psi}_1\in\cH_0.
  \label{eq:15}
\end{equation}
The integral on the right-hand side is the BKS pairing, see \cite{Sni80} for
details. The quantum operator associated to $H$ is defined, for
$\psi=s\otimes\sqrt{\nu}\in\cH_0$, as
\begin{equation}
  \label{eq:16}
  \hat{H}(\psi) =
  i\hbar \frac{d}{dt}\cH_{0t}\circ
  \phi^{H,\mathcal L}_{t,*}(s)\otimes\phi^H_{t,*}(\sqrt{\nu})\big|_{t=0},
\end{equation}
In the Schr\"{o}dinger representation of a natural mechanical system,
if $(q^i,p_i)$ are coordinates on $T^*Q$ we have the
following quantum operators:
\begin{gather}
  \label{eq:17}
  \widehat{q^i}(\psi) = i\hbar q^i s\otimes\sqrt{\nu_g}, \\\label{eq:20}
  \widehat{f^ip_i}(\psi) = - i\hbar\left(f^i\pd{}{p_i}\psi^0 +
    \frac{1}{2}\frac{\partial_i(f^i\sqrt{|g|})}{\sqrt{|g|}}\psi^0\right)
  b_0\otimes\sqrt{\nu_g}, \\\label{eq:21} \widehat H= \left(-
    \frac{\hbar^2}{2}\Bigr(\Delta(\psi) - \frac{r_g}{6}\psi\Bigl) +
    V\psi\right)\otimes \sqrt{\nu_g}.
\end{gather}
In the above formulae, $f^i$ is a function on $Q$ and
${\psi}=s\otimes\sqrt{\nu_g}$ denotes an element of the Hilbert space, where
$s=\psi^0b_0$ is a section of $L\to Q$ which is expressed through a local basis
$b_0$. Moreover, $\Delta$ is the Bochner Laplacian of $\nabla$ and $r_g$ is the
scalar curvature of the Riemannian metric $g$. The above formulae can be found,
for example, in \cite{Sni80}.

\section{The geometry of polarizations and the modified
  quantization method}
\label{sec:geom-polar-modif}

In this section we want to show how the existence of a partial flat connection
on a vector bundle is related to infinitesimal and global descent properties of
the vector bundle. This is relevant for geometric quantization since this is
exactly the situation that one has when a prequantum line bundle and a
polarization on a symplectic manifold are given.

\subsection{The geometry of polarizations}
\label{sec:geom-polar}
Let $(M,\omega)$ be a symplectic manifold. For any real polarization $P$ of
$(M,\omega)$ we denote by $N(P)=TM/P$ the normal bundle of the polarization and
$\pi^N\colon TM\to N(P)$ is the natural projection.  It is well known that
$N(P)$ has a natural partial connection relative to $P$, we simply say a
$P$-relative connection, defined by
\[
\widehat\nabla^B_V(\pi^N(D))=\pi^N([V,D]),\quad \textrm{for any}\
V\in\Gamma(M,P), D\in \Gamma(M,TM).
\]
The $P$-relative connection $\widehat\nabla^B$ is called the Bott connection of
$(M,P)$ and one easily checks that it is flat. This connection induces a
connection on $N(P)^*$; the symplectic form gives an isomorphism
$P\xrightarrow{\sim} N(P)^*$ by which we define a flat $P$-relative connection
on the vector bundle $P\to M$ that we also denote by $\widehat\nabla^B$.

We recall a definition which is of key importance to our aims.
\begin{Definition}
  A real polarization $P\hookrightarrow TM$ is called \emph{reducible} if there
  exists a submersion $\pi\colon M\to Q$ such that $P$ coincides with the
  vertical bundle $V(M/Q)\to M$ of the submersion $\pi$.
\end{Definition}

Hence $\pi\colon M\to Q$ is a Lagrangian submersion and there is an equivalence
between reducible real polarizations on $M$ and Lagrangian submersions whose
total space is $M$.

If $P$ is reducible over a submersion $\pi$, then $P$-relative connections are
$\pi$-relative connections (see Subsection~\ref{sec:math-prel}).

\begin{Proposition}\label{pro:rel-conn-ind-red-polarizations}
  Let $(M,\omega)$ be a symplectic manifold and let $P\to M$ be a reducible
  real polarization with associated submersion $\pi\colon M\to Q$. Then
  \begin{enumerate}
  \item the symplectic form yields an isomorphism $P\xrightarrow{\sim}
    \pi^*T^*Q$ that maps $\widehat\nabla^B$ into $\widehat\nabla^\pi$;
  \item The relative connection induced on $P\to M$ by the pullback of any
    connection $\nabla^Q$ on the cotangent bundle $T^*Q\to Q$ coincides with
    the flat $\pi$-relative connection $\widehat\nabla^\pi$ induced by the Bott
    connection, that is
  \[
  \pi_{M/Q}(\pi^*(\nabla^Q))=\widehat\nabla^\pi.
  \]
  \end{enumerate}
\end{Proposition}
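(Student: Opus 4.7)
The plan is to exploit the Lagrangian character of $P$ together with the fact that $\pi$ is a submersion to build a canonical isomorphism $P\simeq \pi^*T^*Q$, and then to verify that the Bott connection corresponds under this isomorphism to the natural flat relative connection $\widehat\nabla^\pi$. Part (2) will then follow at once from part (1) together with Lemma~\ref{Lem:canonical-relative-connection-on pullbacks}.

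For part (1), I first use that $P=V(M/Q)=\ker T\pi$, so that the differential of $\pi$ factors through an isomorphism $N(P)=TM/P\xrightarrow{\sim}\pi^*TQ$. Next, since $P$ is Lagrangian, the map $V\mapsto \iota_V\omega$ sends $P$ into the annihilator of $P$ in $T^*M$, which is canonically $N(P)^*$; nondegeneracy of $\omega$ modulo $P$ upgrades this to an isomorphism $P\xrightarrow{\sim} N(P)^*$. Composing with the dual of the first identification produces the desired $P\xrightarrow{\sim}\pi^*T^*Q$.

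To see that this intertwines $\widehat\nabla^B$ with $\widehat\nabla^\pi$, I would work with horizontal 1-forms, namely sections of $T^*M$ annihilating $P$, which under the identifications above correspond to sections of $\pi^*T^*Q$. Any such form is locally of the type $\tilde\alpha = f\,\pi^*\beta$ with $f\in\cin(M)$ and $\beta\in\Omega^1(Q)$. For $V\in\Gamma(P)$, the dualization of the Bott formula $\widehat\nabla^B_V(\pi^N(D))=\pi^N([V,D])$ to horizontal 1-forms is the Lie derivative $L_V$; since $V$ is $\pi$-vertical one has $L_V(\pi^*\beta)=0$, and hence by Leibniz $L_V(f\,\pi^*\beta)=(Vf)\,\pi^*\beta$. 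This matches $\widehat\nabla^\pi_V(f\,\pi^*\beta)=(Vf)\,\pi^*\beta$ from the defining formula for the natural relative connection on a pullback, proving the intertwining.

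Part (2) is then immediate: applying Lemma~\ref{Lem:canonical-relative-connection-on pullbacks} with $F=T^*Q$ gives $\pi_{M/Q}(\pi^*\nabla^Q)=\widehat\nabla^\pi$ on $\pi^*T^*Q$, and transporting through the isomorphism of part (1) yields $\widehat\nabla^B$ on $P$, independent of the choice of $\nabla^Q$. The only real obstacle is the bookkeeping inherent in tracking the three identifications---submersion quotient, symplectic duality, and dualization of the Bott connection---and verifying that the dualization of $\widehat\nabla^B$ from $N(P)$ to $N(P)^*$ matches what the symplectic isomorphism transports from $P$. Once these identifications are fixed with consistent conventions, the verification reduces to the local computation above.
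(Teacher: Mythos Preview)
Your argument is correct and is precisely the proof the paper leaves implicit: the paper states the proposition without a formal proof, relying on the preceding paragraph (which defines $\widehat\nabla^B$ on $P$ by transporting the dual Bott connection on $N(P)^*$ through the symplectic isomorphism $P\xrightarrow{\sim}N(P)^*$) together with Lemma~\ref{Lem:canonical-relative-connection-on pullbacks} for part~(2). Your verification via $L_V(f\,\pi^*\beta)=(Vf)\,\pi^*\beta$ is exactly the ``straightforward computation'' the authors have in mind, and the bookkeeping worry you flag is resolved by the paper's convention that $\widehat\nabla^B$ on $P$ is \emph{defined} as the transport of the dual Bott connection, so there is nothing further to check there.
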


For any real polarization $P\hookrightarrow TM$ on a symplectic manifold of
dimension $2n$ the bundle of $P$-transversal volume forms is the complex line
bundle $\Lambda_\C^nN(P)^*\to M$. The symplectic form yields an isomorphism of
complex line bundles $N_P=\Lambda_\C^n P\xrightarrow{\sim} \Lambda_\C^nN(P)^*$,
so that we have a flat $P$-relative connection on the line bundle $N_P\to
M$. One says that the polarization $P$ admits a metalinear structure if the
complex line bundle $N_P\to M$ admits a square root $N_P^{1/2}\to M$, which is
called the \emph{bundle of half-forms} transverse to $P$; it is endowed by a
flat $P$-relative connection $\widehat\nabla^B$.

If we consider the Hermitian connection $\nabla^{\mathcal L}$ of a quantum
structure for the symplectic manifold $(M,\omega)$, then the associated
$P$-relative connection $\widehat\nabla^{\mathcal L}=\pi_{M/P}(\nabla^{\mathcal
  L})$ is flat. Indeed one has
\begin{displaymath}
R[{\widehat\nabla}]=\pi_{M/P}(R[\nabla^{\mathcal L}])=
\pi_{M/P}(-i\frac{\omega}{\hbar}\otimes\Id_{\mathcal L})=
-i\frac{\pi_{M/P}(\omega)}{\hbar}\otimes\Id_{\mathcal L}=0,
\end{displaymath}
since $P$ is Lagrangian.

Therefore, the line bundle $\mathcal L\otimes N_P^{1/2}\to M$ can be endowed
with the twisted flat $P$-relative connection $\widehat
\nabla^P=\widehat\nabla^{\mathcal L}\otimes 1+1\otimes \widehat\nabla^B$, where
$\widehat \nabla$ is the $P$-relative connection induced by $\nabla^{\mathcal
  L}$. The space $\Gamma_P(M,\mathcal L\otimes N_P^{1/2})$ of polarized
sections of $\mathcal L$ is the space of sections of the line bundle $\mathcal
L\otimes N_P^{1/2}\to M$ that are covariantly constant for the $P$-relative
connection $\widehat\nabla^P$:
\begin{displaymath}
  \Gamma_P(M,\mathcal L\otimes
  N_P^{1/2})=\{s\in\Gamma(M, \mathcal L\otimes N_P^{1/2})
    \ \colon \widehat\nabla^P_V s=0,
  \text{for every}\ V\in\Gamma(M,P)\}.
\end{displaymath}

It is well known that a complex line bundle $K\to M$ admits a square root if,
and only if, its first Chern class $c_1(K)\in H^2(M;\Z)$ is even.  If $P$ is a
real polarization which is reducible over the submersion $\pi\colon M\to Q$,
then the symplectic isomorphism $P\to \pi^*T^*Q$ implies that $N_P=\pi^*{K_Q}$,
where $K_Q=\Lambda_\C^n T^*Q\to Q$ is the determinant of the complexified
cotangent bundle of $Q$.  Hence $P$ admits a metalinear structure if and only
if $K_Q$ admits a square root $K_Q^{1/2}\to Q$. In particular, if $Q$ is
orientable it follows that $P$ admits a metalinear structure.

In the reducible case, if $\pi^*K_{Q}^{1/2}\to M$ is a metalinear structure,
then any connection $\nabla^Q$ on $K_{Q}^{1/2}\to Q$ endows the line bundle
$\mathcal L\otimes \pi^*K_{Q}^{1/2}\to M$ with the connection
$\widetilde\nabla=\nabla^{\mathcal L}\otimes 1+1\otimes
\pi^*(\nabla^Q)$. Taking into account Proposition
\ref{pro:rel-conn-ind-red-polarizations} we immediately get the following key
result.

\begin{Theorem}\label{sec:geom-polar-2}
  Let $P$ be a polarization which is reducible over the submersion $\pi\colon
  M\to Q$ and let $K^{1/2}_Q\to Q$ be a metalinear structure. Then, every
  linear connection $\nabla^Q$ on $K^{1/2}_Q\to Q$ yields the same space of
  polarized sections
  \begin{equation}
    \label{eq:1}
    \Gamma_P(M,\mathcal L\otimes N_P^{1/2})=\{s\in\Gamma(M,
    \mathcal L\otimes N_P^{1/2})\ \colon \widetilde\nabla_V s=0,\forall\
    V\in\Gamma(M,P)\}.
  \end{equation}
\end{Theorem}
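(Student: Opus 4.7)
The plan is to reduce the condition defining polarized sections to a statement about the $\pi$-relative part of $\widetilde\nabla$, and then to show that this relative part is intrinsic, independent of the choice of $\nabla^Q$. The independence will follow immediately from Lemma~\ref{Lem:canonical-relative-connection-on pullbacks}.

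First I would observe that since $P$ is reducible over $\pi$, one has $P=V(M/Q)$, so that a vector field $V\in\Gamma(M,P)$ is nothing but a $\pi$-vertical vector field. Consequently the condition $\widetilde\nabla_V s=0$ for every $V\in\Gamma(M,P)$ is equivalent to requiring $s$ to be annihilated by the $\pi$-relative connection $\pi_{M/Q}(\widetilde\nabla)$ on the line bundle $\mathcal L\otimes\pi^*K_Q^{1/2}\simeq \mathcal L\otimes N_P^{1/2}$, the isomorphism coming from Proposition~\ref{pro:rel-conn-ind-red-polarizations} extended to half-forms through the square-root construction. The tensor product rule for connections restricts to vertical directions to give
\begin{equation*}
\pi_{M/Q}(\widetilde\nabla)=\pi_{M/Q}(\nabla^{\mathcal L})\otimes 1
+1\otimes \pi_{M/Q}(\pi^*\nabla^Q).
\end{equation*}

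Next I would handle the two summands separately. The first summand equals $\widehat\nabla^{\mathcal L}$, the flat $\pi$-relative connection attached to the prequantum structure, which manifestly does not involve $\nabla^Q$. For the second summand, Lemma~\ref{Lem:canonical-relative-connection-on pullbacks} applied to the vector bundle $K_Q^{1/2}\to Q$ and to the connection $\nabla^Q$ yields
\begin{equation*}
\pi_{M/Q}(\pi^*\nabla^Q)=\widehat\nabla^\pi,
\end{equation*}
the canonical flat $\pi$-relative connection on the pullback bundle $\pi^*K_Q^{1/2}\to M$. Since $\widehat\nabla^\pi$ is determined purely by the pullback structure, it is the same for every choice of $\nabla^Q$. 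Hence $\pi_{M/Q}(\widetilde\nabla)$ does not depend on $\nabla^Q$, and neither does its kernel, which is precisely the right-hand side of (\ref{eq:1}).

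Finally, to identify this common space with the already-defined $\Gamma_P(M,\mathcal L\otimes N_P^{1/2})$ built from the Bott connection, I would invoke part~(1) of Proposition~\ref{pro:rel-conn-ind-red-polarizations}: the symplectic isomorphism $P\xrightarrow{\sim}\pi^*T^*Q$ intertwines $\widehat\nabla^B$ with $\widehat\nabla^\pi$. Passing to top exterior powers and then to square roots transports this identification to $N_P^{1/2}\simeq\pi^*K_Q^{1/2}$, so that $\widehat\nabla^B$ corresponds to the canonical $\widehat\nabla^\pi$. Therefore $\pi_{M/Q}(\widetilde\nabla)=\widehat\nabla^{\mathcal L}\otimes 1+1\otimes\widehat\nabla^B=\widehat\nabla^P$, and the two descriptions of polarized sections coincide.

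I expect no real obstacle beyond bookkeeping; all analytic content has been packaged in the Lemma and in Proposition~\ref{pro:rel-conn-ind-red-polarizations}. The one subtlety worth double-checking is the compatibility of the square-root construction on $K_Q$ with the pullback operation, i.e.\ that the metalinear identification $N_P^{1/2}\simeq \pi^*K_Q^{1/2}$ genuinely intertwines the induced relative connections on half-forms and not merely on top forms.
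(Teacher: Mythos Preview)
Your proof is correct and follows essentially the same route as the paper's own argument: the paper's proof is a one-sentence assertion that the $\pi$-relative connection induced by $\nabla^{\mathcal L}\otimes 1+1\otimes\pi^*(\nabla^Q)$ coincides with $\widehat\nabla^P=\widehat\nabla^{\mathcal L}\otimes 1+1\otimes\widehat\nabla^B$, invoking Proposition~\ref{pro:rel-conn-ind-red-polarizations} in the lead-in. You have simply unpacked this into its two tensor factors, citing Lemma~\ref{Lem:canonical-relative-connection-on pullbacks} for the half-form factor and part~(1) of Proposition~\ref{pro:rel-conn-ind-red-polarizations} for the identification $\widehat\nabla^\pi\leftrightarrow\widehat\nabla^B$; the added remark about compatibility of the square-root with pullback is a reasonable caveat but does not alter the strategy.
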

\begin{proof}
  For any connection $\nabla^Q$ on $K_Q^{1/2}\to Q$, the $\pi$-relative
  connection on $\mathcal L\otimes \pi^*K_Q^{1/2}\to M$ induced by the twisted
  connection $\nabla^{\mathcal L}\otimes 1+1\otimes \pi^*(\nabla^Q)$ coincides
  with the flat $\pi$-relative connection $\widehat
  \nabla^P=\widehat\nabla^{\mathcal L}\otimes 1+1\otimes \widehat\nabla^B$.
\end{proof}

\begin{Corollary}\label{sec:geom-polar-1}
  If, in addition to the above hypotheses, the prequantum structure $(\mathcal
  L, {\langle\, , \rangle}^{\mathcal L}, \nabla^{\mathcal L})$ is reducible
  over the submersion $\pi\colon M\to Q$ and $( L, {\langle\, , \rangle},
  \nabla)$ is a reduction, then the space of polarized sections turns out to be
  \begin{equation}
    \label{eq:2}
    \Gamma_P(M,\mathcal L\otimes N_P^{1/2})= \Gamma(Q,{L}\otimes_Q K^{1/2}_Q).
  \end{equation}
\end{Corollary}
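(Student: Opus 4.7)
My plan is to rewrite $\mathcal L \otimes N_P^{1/2}$ as a pullback bundle $\pi^*(L \otimes_Q K_Q^{1/2})$, transport the defining flat $P$-relative connection on the left-hand side to the natural flat $\pi$-relative connection on this pullback, and then invoke the standard fact that sections annihilated by the natural $\pi$-relative connection of a pullback correspond bijectively to sections on the base.

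First I would assemble the bundle isomorphism. Proposition \ref{pro:rel-conn-ind-red-polarizations} supplies the symplectic isomorphism $P \simeq \pi^* T^*Q$, and passing to top complex exterior powers and choosing square roots compatibly yields $N_P^{1/2} \simeq \pi^* K_Q^{1/2}$; here I use that the metalinear structure on $P$ arises by pullback from a square root of $K_Q$. Combined with the Hermitian isomorphism $\mathcal L \simeq \pi^* L$ furnished by the reducibility of the prequantum structure, this gives $\mathcal L \otimes N_P^{1/2} \simeq \pi^*(L \otimes_Q K_Q^{1/2})$.

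Next I would match the flat relative connections on either side. Choose any linear connection $\nabla^Q$ on $K_Q^{1/2}$ and form $\widetilde\nabla^Q = \nabla \otimes 1 + 1 \otimes \nabla^Q$ on $L \otimes_Q K_Q^{1/2}$; by Lemma \ref{Lem:canonical-relative-connection-on pullbacks}, the $\pi$-relative connection on $\pi^*(L \otimes_Q K_Q^{1/2})$ induced by $\pi^*\widetilde\nabla^Q$ is exactly the natural $\widehat\nabla^\pi$. On the other side, Theorem \ref{sec:geom-polar-2} identifies the defining relative connection $\widehat\nabla^P$ on $\mathcal L \otimes N_P^{1/2}$ with the $P$-relative part of $\nabla^{\mathcal L} \otimes 1 + 1 \otimes \pi^*\nabla^Q$. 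Applying Corollary \ref{Cor:red-quantum-structure} together with Theorem \ref{Theo:descent-absolute-parallelism} to the flat $\pi$-relative connection $\pi_{M/P}(\nabla^{\mathcal L})$, I would arrange the isomorphism $\mathcal L \simeq \pi^* L$ so that $\pi_{M/P}(\nabla^{\mathcal L})$ transports precisely onto $\widehat\nabla^\pi$ on $\pi^* L$. Tensoring then produces the desired identification of $\widehat\nabla^P$ with $\widehat\nabla^\pi$ on $\pi^*(L \otimes_Q K_Q^{1/2})$.

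The final step is to show that sections of $\pi^* E$ annihilated by $\widehat\nabla^\pi$ correspond precisely to sections of $E$. Writing a local section as $f \cdot \pi^* e$ in a frame pulled back from $Q$, the identity $\widehat\nabla^\pi_V(f \pi^* e) = V(f) \pi^* e$ forces $f$ to be constant along the (connected) fibres of $\pi$, so $f$ descends to a function on $Q$. A standard globalization argument then yields $\Gamma_P(M, \mathcal L \otimes N_P^{1/2}) \simeq \Gamma(Q, L \otimes_Q K_Q^{1/2})$. The main obstacle is the alignment in Step 2: one must verify that the Hermitian isomorphism $\mathcal L \simeq \pi^* L$ furnished by reducibility can be chosen so that it carries $\pi_{M/P}(\nabla^{\mathcal L})$ onto the natural $\widehat\nabla^\pi$ on $\pi^* L$, rather than onto some other flat relative connection with trivial holonomy. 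This is guaranteed by the uniqueness clause in Theorem \ref{Theo:descent-absolute-parallelism}, but the bookkeeping of the various isomorphisms requires care.
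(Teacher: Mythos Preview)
Your proposal is correct and follows essentially the same strategy as the paper's proof: identify $\mathcal L\otimes N_P^{1/2}$ with the pullback $\pi^*(L\otimes_Q K_Q^{1/2})$, argue that the defining $P$-relative connection becomes the natural $\widehat\nabla^\pi$, and conclude that parallel sections are exactly pullbacks of sections on $Q$. The paper compresses your Steps~2 and~3 into two sentences, writing $\Gamma(M,\pi^*E)=\mathcal C^\infty(M)\otimes_{\mathcal C^\infty(Q)}\Gamma(Q,E)$ and then asserting that ``both partial covariant derivatives $\nabla^{\mathcal L}$ and $\pi^*\nabla$ are a pullback''; you unpack this assertion explicitly via Lemma~\ref{Lem:canonical-relative-connection-on pullbacks}, Theorem~\ref{sec:geom-polar-2}, and Theorem~\ref{Theo:descent-absolute-parallelism}, which is a welcome clarification since the paper's phrase is doing real work.
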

\begin{proof}
  Since the line bundle $\mathcal L\otimes_M N_P^{1/2}=
  \pi^*({L}\otimes_QK_Q^{1/2})$ is a pullback, its space of global
  sections is given by
  \begin{displaymath}
    \Gamma(M,\mathcal L\otimes_M N_P^{1/2})=
    \mathcal{C}^\infty(M)\otimes_{\mathcal{C}^\infty(Q)}
    \Gamma(Q,{L}\otimes_QK_Q^{1/2}).
  \end{displaymath}
  Taking into account now that both partial covariant derivatives
  $\nabla^{\mathcal L}$ and $\pi^*({\nabla})$ are a pullback, we immediately
  obtain the required identification.
\end{proof}
\begin{Remark}
  As every connection $\widetilde\nabla$ in the family of connections
  $\widetilde\nabla=\nabla^{\mathcal L}\otimes 1+1\otimes \pi^*(\nabla^Q)$
  parameterized by the connections $\nabla^Q$ of $K_Q^{1/2}$, yields the same
  space of polarized sections $\Gamma_P(M,\mathcal L\otimes N_P^{1/2})$, we can
  quantize observables by using the connection $\widetilde\nabla$ instead of
  the $\pi$-relative connection $\widehat \nabla^P$. While the space of
  polarized sections does not change, we will see that the quantization of
  classical observables changes in a substantial way.
\end{Remark}

\subsection{The modified quantization method}
\label{sec:modif-quant-meth}

In this Subsection we propose a modification of the standard GQ scheme which
will make a fundamental use of the geometric structure that is present on the
space of polarized sections.  More precisely, we will show that while we will
use the same Hilbert space as GQ, we will determine quantizable observables in
the more general situation of a reducible quantum structure.

Then, we make the key remark that the representation of classical observables
as quantum operators on that space \emph{is not unique}. More precisely, the
representation can be modified with respect to the standard prescription of GQ
(see Subsection~\ref{sec:prel-geom-quant}) by choosing any connection
$\nabla^Q$ on the bundle of half-forms $K_Q^{1/2}\to Q$. Indeed, a connection
on $K_Q^{1/2}\to Q$ is usually defined in an natural way from a linear
connection on $\tau^\vee_Q\colon T^*Q\to Q$, so we will assume that a linear
connection $\nabla^Q$ on $\tau^\vee_Q$ is given.

The most important example that we have in mind is the geometric quantization
of the cotangent bundle $T^*Q$ of an orientable Riemannian manifold
$(Q,g)$. This will be described in next section. However, in principle other
examples are possible and thus it is worth to describe the modified
quantization procedure in a more general situation.

\emph{The main idea is that the action of Hamiltonian vector fields of
  quantizable observables on half-forms must be defined through
  $\nabla^Q$-preserving flows.}

\subsubsection{Reducible quantum structures and Liouville manifolds}

Here we will show that there is a close relationship between reducible quantum
structures and Liouville manifolds. We start by giving the necessary
definitions.

\begin{Definition} Let $(M,\omega)$ be a symplectic manifold.  We say that a
  quantum structure $\mathcal Q=(\mathcal L,\langle\, , \rangle^{\mathcal
    L},\nabla^{\mathcal L},P)$ is reducible over a submersion $\pi\colon M\to
  Q$ if both the prequantum structure $(\mathcal L,\langle\, ,
  \rangle^{\mathcal L},\nabla^{\mathcal L})$ and the polarization $P$ are
  reducible over $\pi$ and the prequantum structure has a reduction
  $({L},\langle\, , \rangle,\nabla)$ such that $\nabla^{\mathcal L}$ and
  $\pi^*\nabla$ have isomorphic $P$-polarized sections. In this case we say
  that $({L},\langle\, , \rangle,\nabla)$ is a reduction of the quantum
  structure $\mathcal Q$.
\end{Definition}

\begin{Proposition}\label{Prop:red-quant-pol-compat}
  If a quantization structure $(\mathcal L, {\langle\, , \rangle}^{\mathcal L},
  \nabla^{\mathcal L},P)$ is reducible over a submersion $\pi\colon M\to Q$ and $({L},\langle\, ,
  \rangle,\nabla)$ is a reduction, then
  there exists a $1$-form $\alpha\in\Omega^1(M)$ vanishing on $V(M/Q)$ and a
  closed $2$-form $\bar\omega\in\Omega^2(Q)$ such that
  \begin{equation}
    \omega=\pi^*\bar\omega+d\alpha.\label{eq:32}
  \end{equation}
  Moreover $\bar\omega=i\hbar R[\nabla]$.
\end{Proposition}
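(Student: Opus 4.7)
The plan is to derive the decomposition from Proposition~\ref{Prop:red-quant-struct} applied to the prequantum part of the data, and then to extract the additional vanishing condition on $V(M/Q)$ from the compatibility of the reduction with the polarization that is built into the definition of a reducible quantum structure.

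First I would invoke Proposition~\ref{Prop:red-quant-struct}: since the prequantum structure $(\mathcal L, \langle\, , \rangle^{\mathcal L}, \nabla^{\mathcal L})$ is reducible over $\pi$ with reduction $(L, \langle\, , \rangle, \nabla)$, there exists $\alpha\in\Omega^1(M)$ such that $\omega=\pi^*\bar\omega+d\alpha$ with $\bar\omega=i\hbar R[\nabla]$. Under the identification $\mathcal L\simeq\pi^*L$ coming from the reduction, this $\alpha$ is characterized by $\nabla^{\mathcal L}-\pi^*\nabla=-\frac{i}{\hbar}\alpha$. The form $\bar\omega$ is automatically closed, being proportional to the curvature of a connection on a line bundle and therefore satisfying $d\bar\omega=0$ by the Bianchi identity.

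The new, and main, step is to show $\alpha|_{V(M/Q)}=0$. Here I would use the polarization-compatibility: by hypothesis $\nabla^{\mathcal L}$ and $\pi^*\nabla$ define isomorphic spaces of $P$-polarized sections. Since $P$ is reducible over $\pi$ one has $P=V(M/Q)$, and both connections induce $\pi$-relative connections on the common line bundle $\mathcal L\simeq\pi^*L$ via the operator $\pi_{M/P}$. By Lemma~\ref{Lem:canonical-relative-connection-on pullbacks}, $\pi_{M/P}(\pi^*\nabla)$ is the natural flat relative connection $\widehat\nabla^\pi$, and by Corollary~\ref{Cor:red-quantum-structure} the connection $\pi_{M/P}(\nabla^{\mathcal L})$ is likewise a flat $\pi$-relative connection with trivial fiberwise holonomy. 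Coincidence of the corresponding spaces of polarized sections then forces the two flat partial connections themselves to coincide, because on a line bundle the difference of two flat $\pi$-relative connections is a section of $V(M/Q)^*$ which, evaluated on any local parallel section common to both, must vanish identically. Hence $\pi_{M/P}(\nabla^{\mathcal L}-\pi^*\nabla)=-\frac{i}{\hbar}\alpha|_{V(M/Q)}=0$, which is the required vanishing.

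The only real obstacle is precisely the translation of the phrase \emph{isomorphic $P$-polarized sections} into the statement that the two induced flat $\pi$-relative connections on $\mathcal L\simeq\pi^*L$ agree; but this is exactly the content built into the definition of a reduction of a quantum structure, and once stated in terms of partial connections the vertical vanishing of $\alpha$ is an immediate consequence of the above discussion.
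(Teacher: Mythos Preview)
Your proposal is correct and follows essentially the same approach as the paper. The paper's proof is slightly more direct: rather than invoking Lemma~\ref{Lem:canonical-relative-connection-on pullbacks} and Corollary~\ref{Cor:red-quantum-structure}, it simply writes, for any vertical $V$ and any section $s$, the identity $\nabla^{\mathcal L}_V s-\pi^*\nabla_V s=-\tfrac{i}{\hbar}\alpha(V)s$ and reads off immediately that the two connections have the same $P$-polarized sections if and only if $\alpha(V)=0$.
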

\begin{proof} As in the proof of Proposition \ref{Prop:red-quant-struct} one
  has $\nabla^{\mathcal L} -\pi^*\nabla=-\frac{i}{\hbar}\alpha$ for a certain
  $1$-form $\alpha\in\Omega(M)$. Given a vertical vector field
  $V\in\Gamma(M,V(M/Q))=\Gamma(M,P)$ one has
  \begin{displaymath}
    \nabla^{\mathcal L} _V s-\pi^*\nabla_V
    s=-\frac{i}{\hbar}\alpha(V)s
  \end{displaymath}
  and therefore, $\nabla^{\mathcal L} $ and $\pi^*\nabla$ have the same
  $P$-polarized sections if and only if $\alpha(V)=0$. The last claim follows
  from Proposition \ref{Prop:red-quant-struct}.
\end{proof}

According to (see \cite[pag. 234-235]{Va}), a \emph{reducible Liouville
  manifold} is a pair $((M,\omega), P)$ for\-med by a symplectic manifold
$(M,\omega)$ and a real polarization $P$ which is reducible over a submersion
$\pi\colon M\to Q$ such that there exists a $1$-form $\alpha\in \Omega^1(M)$
that vanishes on $P$ and $\omega-d\alpha=\pi^*\bar\omega$ for some
$\omega\in\Omega^2(Q)$. Any such $\alpha\in \Omega^1(Q)$ is called a
(generalized) \emph{Liouville $1$-form} for the Liouville manifold.

We recall that a polarization on a symplectic manifold yields a Liouville
manifold if and only if a certain characteristic class $\mathcal E(P)\in
H^1(M,\mathcal V_P)$, the Euler obstruction of $P$, vanishes. The class is
defined in terms of the $\pi$-relative local system $\mathcal
V_P:=V_{\widehat\nabla^B}(M/Q)$ defined by the parallel sections of $V(M/Q)$
with respect to the Bott connection $\widehat\nabla^B$. In any reducible
Liouville manifold $((M,\omega),P)$ with Liouville form $\alpha$ it holds that
$(M,d\alpha)$ is a symplectic manifold. Therefore, any Liouville manifold is
not compact.

After the above considerations, Proposition \ref{Prop:red-quant-pol-compat}
can be reformulated as follows.
\begin{Proposition} If a quantization $\mathcal Q=(\mathcal L, {\langle\, ,
    \rangle}^{\mathcal L}, \nabla^{\mathcal L},P)$ is reducible over a
  submersion $\pi\colon M\to Q$ and $({L},\langle\, , \rangle,\nabla)$ is a
  reduction, then $((M,\omega),\alpha, P)$ is a Liouville manifold, where
  $\alpha$ is determined by $\nabla^{\mathcal
    L}-\pi^*\nabla=-\frac{i}{\hbar}\alpha$.
\end{Proposition}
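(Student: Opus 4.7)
The statement follows almost immediately from the preceding Proposition \ref{Prop:red-quant-pol-compat}; the work consists in checking that its conclusions match, item by item, the definition of reducible Liouville manifold recalled in the paragraph above from Vaisman's book.

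The plan is as follows. First, I would invoke the hypothesis: the quantum structure $\mathcal Q=(\mathcal L,\langle\,,\,\rangle^{\mathcal L},\nabla^{\mathcal L},P)$ is reducible over $\pi\colon M\to Q$ with reduction $(L,\langle\,,\,\rangle,\nabla)$. In particular the polarization $P$ is reducible over $\pi$, which is precisely the first requirement in the definition of a reducible Liouville manifold, namely $P=V(M/Q)$.

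Next I would apply Proposition \ref{Prop:red-quant-pol-compat} to extract the $1$-form $\alpha\in\Omega^1(M)$ determined (up to exact terms) by $\nabla^{\mathcal L}-\pi^*\nabla=-\tfrac{i}{\hbar}\alpha$. That proposition gives at once the two remaining ingredients:
\begin{itemize}
\item $\alpha(V)=0$ for every $V\in\Gamma(M,V(M/Q))=\Gamma(M,P)$, so $\alpha$ vanishes on $P$;
\item there exists a closed $2$-form $\bar\omega=i\hbar R[\nabla]\in\Omega^2(Q)$ such that $\omega=\pi^*\bar\omega+d\alpha$, i.e.\ $\omega-d\alpha=\pi^*\bar\omega$.
\end{itemize}

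Finally I would observe that these are exactly the defining properties of a reducible Liouville manifold in the sense of \cite{Va}: a symplectic manifold $(M,\omega)$, a real polarization $P$ reducible over a submersion $\pi\colon M\to Q$, and a $1$-form $\alpha\in\Omega^1(M)$ vanishing on $P$ with $\omega-d\alpha$ basic over $Q$. Hence $((M,\omega),\alpha,P)$ is a Liouville manifold with Liouville $1$-form $\alpha$. There is no real obstacle: all the analytic content has already been packaged in Proposition \ref{Prop:red-quant-pol-compat}, and the only thing to verify here is the translation between the two languages; in particular one does not need to check independently that the Euler obstruction $\mathcal E(P)$ vanishes, since the explicit construction of $\alpha$ provides a global Liouville form and thus bypasses that cohomological criterion.
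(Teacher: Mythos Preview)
Your proposal is correct and matches the paper's approach exactly: the paper does not even give a separate proof, stating instead that ``After the above considerations, Proposition~\ref{Prop:red-quant-pol-compat} can be reformulated as follows'' before the statement. Your write-up simply makes explicit the routine verification that the output of Proposition~\ref{Prop:red-quant-pol-compat} coincides with Vaisman's definition of a reducible Liouville manifold.
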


We also have the following result.

\begin{Theorem}\label{Theo:red-quantum-structure-pol}
  A quantization $(\mathcal L, {\langle\, , \rangle}^{\mathcal L},
  \nabla^{\mathcal L},P)$ is reducible over a submersion $\pi\colon M\to Q$ if
  and only if the $\pi$-relative connection $\pi_{M/Q}(\nabla^{\mathcal L})$ is
  an absolute parallelism relative to $\pi$; that is, if and only if the
  holonomy groups of $\pi_{M/Q}(\nabla^{\mathcal L})$ along the fibers of $\pi$
  are trivial.
\end{Theorem}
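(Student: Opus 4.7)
The plan is to upgrade Corollary \ref{Cor:red-quantum-structure}, which characterizes reducibility of the bare prequantum structure, by incorporating the extra compatibility condition between $\nabla^{\mathcal L}$ and the reduction $\nabla$ that enters the definition of a reducible quantum structure. The key observation underlying both directions is that, since the polarization $P$ is reducible over $\pi$, one has $P=V(M/Q)$, so that the $P$-relative restriction of $\nabla^{\mathcal L}$ coincides with its $\pi$-relative restriction $\pi_{M/Q}(\nabla^{\mathcal L})$.

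For the ``only if'' direction I would start from a reduction $(L,\langle\,,\,\rangle,\nabla)$ of the quantum structure and invoke Proposition \ref{Prop:red-quant-pol-compat}, which furnishes a $1$-form $\alpha\in\Omega^1(M)$ vanishing on $V(M/Q)$ with $\nabla^{\mathcal L}-\pi^*\nabla=-\tfrac{i}{\hbar}\alpha$. Restricting both sides to an arbitrary vertical vector field kills the $\alpha$-term, yielding $\pi_{M/Q}(\nabla^{\mathcal L})=\pi_{M/Q}(\pi^*\nabla)$; Lemma \ref{Lem:canonical-relative-connection-on pullbacks} identifies the right-hand side with the canonical flat $\pi$-relative connection $\widehat\nabla^\pi$, whose holonomy along each fibre of $\pi$ is automatically trivial.

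For the converse I would start from the absolute parallelism hypothesis on $\pi_{M/Q}(\nabla^{\mathcal L})$ and apply Theorem \ref{Theo:descent-absolute-parallelism} to obtain a line bundle $L\to Q$ and an isomorphism $\mathcal L\simeq \pi^*L$ that carries $\pi_{M/Q}(\nabla^{\mathcal L})$ precisely onto $\widehat\nabla^\pi$ on $\pi^*L$. The Hermitian metric descends by the uniqueness part of the same theorem, producing a Hermitian bundle $(L,\langle\,,\,\rangle)$ on $Q$. Choosing any Hermitian connection $\nabla$ on $L$, Proposition \ref{Prop:red-quant-struct} supplies $\alpha\in\Omega^1(M)$ with $\nabla^{\mathcal L}-\pi^*\nabla=-\tfrac{i}{\hbar}\alpha$. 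Applying $\pi_{M/Q}$ to this equality and using Lemma \ref{Lem:canonical-relative-connection-on pullbacks} together with the identification just obtained, both $\pi$-relative restrictions reduce to $\widehat\nabla^\pi$, so $\alpha|_{V(M/Q)}=0$. By Proposition \ref{Prop:red-quant-pol-compat} this is exactly the statement that $\nabla^{\mathcal L}$ and $\pi^*\nabla$ share the same $P$-polarized sections, so $(L,\langle\,,\,\rangle,\nabla)$ is a reduction of the quantum structure.

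The main obstacle, and what distinguishes this theorem from Corollary \ref{Cor:red-quantum-structure}, is that the descent in the converse must be carried out through the specific connection $\pi_{M/Q}(\nabla^{\mathcal L})$ rather than through an a priori unrelated absolute parallelism on $\mathcal L$; only this guarantees that the $1$-form $\alpha$ produced by Proposition \ref{Prop:red-quant-struct} annihilates the vertical bundle, and hence yields the required $P$-polarized compatibility rather than merely the topological reducibility of the prequantum line bundle.
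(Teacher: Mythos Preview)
Your proposal is correct and follows essentially the same route as the paper: both directions hinge on Proposition \ref{Prop:red-quant-pol-compat}, Lemma \ref{Lem:canonical-relative-connection-on pullbacks}, and Theorem \ref{Theo:descent-absolute-parallelism} in exactly the way you describe, with the key point being that $\pi_{M/Q}(\nabla^{\mathcal L})=\widehat\nabla^\pi$ forces $\alpha|_{V(M/Q)}=0$. Your version is slightly more explicit than the paper's about the descent of the Hermitian metric and the choice of $\nabla$ on $L$, but the argument is the same.
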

\begin{proof} If we have a reduction $({ L}, \langle\, , \rangle, \nabla)$,
  then $\nabla^{\mathcal L}-\pi^*\nabla=-\frac{i}{\hbar}\alpha$ and therefore
  by Proposition \ref{Prop:red-quant-pol-compat} one has
  \begin{displaymath}
    \pi_{M/Q}(\nabla^{\mathcal L})-\pi_{M/Q}(\pi^*\nabla)=
      - \frac{i}{\hbar}\pi_{M/Q}(\alpha)=0.
\end{displaymath}
Now taking into account Lemma \ref{Lem:canonical-relative-connection-on
  pullbacks} we get $\pi_{M/Q}(\nabla^{\mathcal L})=\widehat\nabla^\pi$ and
therefore $\pi_{M/Q}(\nabla^{\mathcal L})$ is an absolute parallelism relative
to $\pi$, see \cite{tv16}. For the other implication suppose that
$\pi_{M/Q}(\nabla^{\mathcal L})$ is an absolute parallelism on $\mathcal L$,
then it follows from Theorem \ref{Theo:descent-absolute-parallelism} that there
exist a line bundle ${L}\to Q$ such that $\mathcal{L}\simeq \pi^*{L}$ and
$\pi_{M/Q}(\nabla^{\mathcal L})=\widehat\nabla^\pi$ and again by Lemma
\ref{Lem:canonical-relative-connection-on pullbacks} we have
$\pi_{M/Q}(\nabla^{\mathcal L})=\pi_{M/Q}(\pi^*\nabla)$ which shows that the
$1$-form $\nabla^{\mathcal L} - \pi^*\nabla$ vanishes on $V(M/Q)$. Taking into
account Proposition \ref{Prop:red-quant-pol-compat} this finishes the proof.
\end{proof}

\emph{From now on we assume that $(\mathcal L,\langle\, , \rangle^{\mathcal
    L},\nabla^{\mathcal L},P)$ is a quantum structure reducible over a
  submersion $\pi\colon M\to Q$}, with a reduction $({L},\langle\, ,
\rangle,\nabla)$ and $\alpha$ is a Liouville $1$-form.

\subsubsection{Quantizable observables}

As we have seen in Corollary \ref{sec:geom-polar-1}, in the reducible case the
space of polarized sections is defined in a unique way. This allows us to
define the Hilbert space of the quantum theory.

From now on \emph{we assume that $P$ admits a metalinear structure with a line
  bundle of half-forms $K_Q^{1/2}\to Q$}. The Hilbert space of the quantum
theory is, as in standard GQ, built out of the space of polarized
sections. According to Corollary \ref{sec:geom-polar-1}, \emph{the Hilbert
  space of quantum states $\mathcal{H}_P$ is the $L^2$-completion of compactly
  supported sections in $\Gamma(Q,{L}\otimes_Q K^{1/2}_Q)$.}

We recall that in standard GQ, quantizable observables are determined by
requiring that they should preserve the space of polarized sections. Since in
our scheme this space is the same as in GQ, modified quantizable observables
must satisfy the same condition as in GQ: they are functions
$f'\in\mathcal{O}_P\subset\mathcal{C}^{\infty}(M)$ such that $[X_{f'},P]\subset
P$. We recall that $\mathcal{O}_P$ is a Lie subalgebra of the Poisson algebra
$(\cin(M),\{\ ,\ \})$.

Let $((M,\omega),P)$ be a reducible Liouville manifold with associated
submersion $\pi\colon M\to Q$. We observe that for any Liouville $1$-form
$\alpha$ there is a natural lift mapping $\ell_\alpha:\mathfrak{X}(Q)\to
\mathfrak{X}(M)$ such that given $X\in \mathfrak{X}(Q)$, its lift is the unique
vector field $X^{\ell_\alpha}\in \mathfrak{X}(M)$ that projects to $X$ and
satisfies $i_{X^{\ell_\alpha}}d\alpha=-d\alpha(X^{\ell_\alpha})$. In the case
of a cotangent bundle $(M=T^*Q,\omega_0)$, this gives the well known cotangent
lift.

Taking into account that any point of $M$ has a coordinate neighbourhood with
fibered coordinates $\{q_i,p_i\}_{i=1}^n$ for the submersion $\pi\colon M\to Q$
such that $\alpha=p_i dq_i$, see \cite[Proposition 3.5]{Va}, it is easy to
prove the following:
\begin{Theorem}\label{Theo:charact-quantizable-observ} Let $((M,\omega),P)$ be
  a reducible Liouville manifold with associated submersion $\pi\colon M\to
  Q$. If $\alpha$ is a Liouville $1$-form $\alpha$ with associated lift
  $\ell_\alpha:\mathfrak{X}(Q)\to \mathfrak{X}(M)$, then one has $$\mathcal
  O_P=\{f'\in \cin(M)\colon f'=\pi^*{f}+\alpha(X^{\ell_\alpha}),
  {f}\in\cin(Q), X\in\mathfrak{X}(Q)\}.$$ Moreover, for any
  ${f}\in\cin(Q)$ and $X\in\mathfrak{X}(Q)$, the associated
  Hamiltonian vector fields $X_{\pi^*{f}}$,
  $X_{\alpha(X^{\ell_\alpha})}$ on $(M,\omega)$ verify that
  \begin{displaymath}
  X_{\pi^*{f}},\ \text{and}\
  X_{\alpha(X^{\ell_\alpha})}-X^{\ell_\alpha}\ \text{are}\
  \pi\text{-vertical}.
\end{displaymath}
Therefore, for any $f'=\pi^*{f}+\alpha(X^{\ell_\alpha})\in\mathcal O_P$
the Hamiltonian vector field $X_{f}\in\mathfrak{X}(M)$ is
projectable and its projection is ${X}\in\mathfrak{X}(Q)$.
\end{Theorem}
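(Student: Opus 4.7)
The plan is to establish both claims in the local fibered coordinates $(q^i,p_i)$ provided by Vaisman's Proposition 3.5, in which $\alpha = p_i\,dq^i$ and hence $\omega = \pi^*\bar\omega + dp_i\wedge dq^i$ with $\bar\omega=\tfrac{1}{2}\bar\omega_{ij}(q)\,dq^i\wedge dq^j$. In such a chart the polarization $P = V(M/Q)$ is spanned by $\partial/\partial p_i$, so the condition $[X_{f'},P]\subset P$ defining $\mathcal O_P$ is equivalent to the $\pi$-projectability of $X_{f'}$; concretely, that the $\partial/\partial q^i$ components of $X_{f'}$ depend only on the $q^j$.

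Writing $X_{f'} = A^j\,\partial/\partial q^j + B_j\,\partial/\partial p_j$ and unpacking $\iota_{X_{f'}}\omega = -df'$, the coefficients of $dp_i$ give $A^i = \partial f'/\partial p_i$. Projectability therefore forces $\partial^2 f'/\partial p_i\partial p_j = 0$, so $f'$ must be affine in the momenta: $f' = f_0(q) + X^i(q)\,p_i$. To translate this into the global statement I identify $f_0$ with $\pi^*f$ for some $f\in\cin(Q)$ and set $X = X^i\,\partial/\partial q^i \in \mathfrak{X}(Q)$. The canonical lift $X^{\ell_\alpha}$ is fixed by the intrinsic conditions $\pi_*X^{\ell_\alpha} = X$ and $L_{X^{\ell_\alpha}}\alpha = 0$ (equivalently $\iota_{X^{\ell_\alpha}}d\alpha = -d(\alpha(X^{\ell_\alpha}))$); solving these in the chart yields $X^{\ell_\alpha} = X^i\,\partial/\partial q^i - p_j(\partial X^j/\partial q^i)\,\partial/\partial p_i$, whence $\alpha(X^{\ell_\alpha}) = X^i p_i$. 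This produces the set equality asserted for $\mathcal O_P$, and chart-independence is automatic because both $\pi^*f$ and $\alpha(X^{\ell_\alpha})$ are intrinsically defined by $\alpha$ and the pair $(f,X)$.

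For the verticality claims I argue as follows. For $X_{\pi^*f}$, the identity $\omega(X_{\pi^*f},V) = -d(\pi^*f)(V) = 0$ for every vertical $V\in\Gamma(M,P)$ shows that $X_{\pi^*f}$ is $\omega$-orthogonal to $P$; since $P$ is Lagrangian, this forces $X_{\pi^*f}\in P$. For the second, computing the Hamiltonian vector field of $g=\alpha(X^{\ell_\alpha})=X^i p_i$ as above and comparing with $X^{\ell_\alpha}$, the $\partial/\partial q^i$ components coincide (both equal $X^i$), while the $\partial/\partial p_i$ components differ only by the contribution of $\bar\omega_{ij}$, which is vertical. Hence $X_g - X^{\ell_\alpha}$ is $\pi$-vertical. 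Combining, for any $f'=\pi^*f+\alpha(X^{\ell_\alpha})\in\mathcal O_P$ the field $X_{f'}-X^{\ell_\alpha}$ is vertical, so $X_{f'}$ projects onto $X$. The only mild obstacle I foresee is keeping track of signs and index placements in the Hamilton equations produced by the twisted symplectic form $\pi^*\bar\omega + d\alpha$; once the Vaisman chart is in hand, everything else is an essentially mechanical calculation.
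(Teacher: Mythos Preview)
Your proposal is correct and follows exactly the approach the paper indicates: the paper does not give a detailed proof but merely says that, using Vaisman's fibered coordinates $(q^i,p_i)$ with $\alpha=p_i\,dq^i$ (\cite[Proposition~3.5]{Va}), ``it is easy to prove'' the statement, and you have carried out precisely that computation. Your coordinate-free argument for the verticality of $X_{\pi^*f}$ via the Lagrangian property of $P$ is a clean addition, but the overall route is the same as the one the paper gestures at.
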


In the reducible case the space of quantizable observables $\mathcal O_P$ can
be identified with $\cin(Q)\times\mathfrak{X}(Q)$ and thus we can transfer to
it the Lie algebra structure of $\mathcal O_P$. An easy computation proves the
following result.

\begin{Proposition} Let $((M,\omega),P)$ be a reducible Liouville manifold with
  associated submersion $\pi\colon M\to Q$ and Liouville form $\alpha$. Let
  $\bar\omega\in\Omega^2(Q)$ is a closed form such that
  $\omega=d\alpha+\pi^*\bar\omega$, then the Lie algebra structure of $\mathcal
  O_P$ is given by
  \begin{displaymath}
    \{f_1',f_2'\}=
    \{({f}_1,{X}_1),({f}_2,{X}_2)\}=
    ({X}_1{f}_2-{X}_2{f}_1
    -\bar\omega({X}_1,{X}_2),[{X}_1,{X}_2]),
  \end{displaymath}
  for any
  $f_1'=({f}_1,{X}_1),f_2'=({f}_2,
  {X}_2)\in\cin(Q)\times\mathfrak{X}(Q)$.
\end{Proposition}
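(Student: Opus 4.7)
\medskip

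\noindent\textbf{Proof proposal.} The plan is to use the identification $\mathcal O_P \cong \cin(Q)\times\mathfrak X(Q)$ supplied by Theorem~\ref{Theo:charact-quantizable-observ} and compute the two components of the Poisson bracket separately. Write $f_i' = \pi^*f_i + \alpha(X_i^{\ell_\alpha})$ for $i=1,2$. Since $\mathcal O_P$ is a Lie subalgebra of $(\cin(M),\{\cdot,\cdot\})$, the bracket $\{f_1',f_2'\}$ is again of the form $\pi^*f + \alpha(Y^{\ell_\alpha})$ for some $f\in\cin(Q)$ and $Y\in\mathfrak X(Q)$, and by the classical Hamiltonian formalism its Hamiltonian vector field is $[X_{f_1'},X_{f_2'}]$. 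I will identify the pair $(f,Y)$ in two steps.

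First I would handle the vector field component. By Theorem~\ref{Theo:charact-quantizable-observ}, $X_{f_i'}$ is projectable and projects onto $X_i$. Since projection of projectable vector fields is a Lie algebra morphism, the commutator $[X_{f_1'},X_{f_2'}]$ projects onto $[X_1,X_2]$, which—again by Theorem~\ref{Theo:charact-quantizable-observ}—must coincide with the vector field component of $\{f_1',f_2'\}$. This gives $Y=[X_1,X_2]$ with no further computation.

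For the function component I would pass to adapted fibered Darboux coordinates. By \cite[Proposition 3.5]{Va} any point of $M$ has a neighbourhood with coordinates $(q^i,p_i)$ fibered over $\pi$ in which $\alpha=p_i\,dq^i$; combined with Proposition~\ref{Prop:red-quant-pol-compat} this gives locally
\begin{equation*}
\omega = dp_i\wedge dq^i + \tfrac{1}{2}\bar\omega_{ij}(q)\, dq^i\wedge dq^j,
\end{equation*}
and $f_i' = f_i(q) + p_j X_i^j(q)$. Solving $i_{X_{f_i'}}\omega = -df_i'$ yields
\begin{equation*}
X_{f_i'} = X_i^j\,\partial_{q^j} - \bigl(\partial_j f_i + p_k\partial_j X_i^k + \bar\omega_{kj}X_i^k\bigr)\partial_{p_j}.
\end{equation*}
A direct evaluation of $\{f_1',f_2'\}=X_{f_1'}(f_2')$ gathers the momentum-independent part into $X_1 f_2 - X_2 f_1 - \bar\omega(X_1,X_2)$ and the momentum-linear part into $p_j[X_1,X_2]^j = \alpha([X_1,X_2]^{\ell_\alpha})$, matching exactly the stated formula under the identification $\mathcal O_P\cong \cin(Q)\times\mathfrak X(Q)$.

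The main delicate point is not the coordinate algebra, which is routine, but checking that the decomposition $f' \mapsto (f,X)$ is globally well-defined and that the momentum-linear term really assembles into the invariant object $\alpha(Y^{\ell_\alpha})$ for the global vector field $Y=[X_1,X_2]$. This follows from Theorem~\ref{Theo:charact-quantizable-observ} (uniqueness of the pair $(f,X)$ once $\alpha$ is fixed) and from the fact that the lift $\ell_\alpha$ is functorial in $X$, so the local identity $p_j[X_1,X_2]^j = \alpha([X_1,X_2]^{\ell_\alpha})$ holds globally. The $\bar\omega$-term appears intrinsically because on the base $Q$ the magnetic twist $\pi^*\bar\omega$ in $\omega$ is precisely what distinguishes $d\alpha$ from $\omega$; the coordinate calculation merely makes explicit the obstruction encoded by $\bar\omega$.
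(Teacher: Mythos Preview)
Your proposal is correct and is exactly the ``easy computation'' that the paper alludes to without spelling out; the paper offers no proof beyond the sentence ``An easy computation proves the following result.'' Your two-step strategy---reading off the vector field component from projectability (Theorem~\ref{Theo:charact-quantizable-observ}) and extracting the function component via the adapted Darboux coordinates from \cite[Proposition~3.5]{Va}---is precisely the intended argument, and your coordinate check of $\{f_1',f_2'\}=X_{f_1'}(f_2')$ is accurate.
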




\subsubsection{Quantization of classical observables}

Here, quantum operators are modified according to our guiding idea. We denote
by $\tilde{X}^{\nabla^Q}_{f} $ the vector field on the tensor product bundle
$\mathcal L\otimes_M N^{1/2}_P$ naturally induced by the vector fields
$X^{\ell'}_{f'} := h^{\nabla^{\mathcal L}}(X_{f}) + i\frac{f}{\hbar}E^{\mathcal
  L}$, where $E^{\mathcal L}$ is the Euler vector field of the Line bundle
$\mathcal L$, and by $X^{Q}_{f'} := h^{\pi^*(\nabla^Q)}(X_{f'})$.

\begin{Definition}[Modified geometric quantization]
  \label{sec:geom-polar-modif-1}
  A quantizable observable $'\in\mathcal{O}_P$ yields a quantum operator
  \begin{equation}
    \widehat{f'}\colon \Gamma(M,\mathcal L\otimes N_P^{1/2})\to
    \Gamma(M,\mathcal L\otimes N_P^{1/2})\label{eq:33}
  \end{equation}
  that on a decomposable half-form ${\tilde\psi}=s'\otimes\sqrt{\nu}\in
  \Gamma(M,\mathcal L\otimes N_P^{1/2})$ acts by
  \begin{equation}\label{eq:29}
    \widehat{f'}(\tilde{\psi}) = i\hbar L_{\tilde{X}^Q_{f'}}{\tilde\psi} =
    i\hbar(L_{X^{\ell'}_{f'}}s'\otimes\sqrt{\nu} +
    s'\otimes L_{X^Q_{f'}}\pi^*\sqrt{\nu}).
  \end{equation}
\end{Definition}

In the above definition we operate by projectable vector fields. Indeed, given
$f'\in\mathcal O_P$ such that $f'=\pi^*{f}+\alpha(X^{\ell_\alpha})$ with
${f}\in\cin(Q)$ and $X\in\mathfrak{X}(Q)$, one
has
\begin{align*}
  X_{f'}^{\ell'}&=h^{\nabla^{\mathcal L}}(X_{f'})+i\frac{f'}{\hbar}E^{\mathcal
    L}
  =h^{\pi^*\nabla-\frac{i}{\hbar}\alpha}(X_{f})+i\frac{f}{\hbar}E^{\mathcal L}=
  \\
  &=h^{\pi^*\nabla}(X_{f'})-\frac{i}{\hbar}\alpha(X_{f'})E^{\mathcal L} +
  i\frac{f'}{\hbar}E^{\mathcal L}.
\end{align*}
Since $\alpha$ vanishes on vertical vector fields, by Theorem
\ref{Theo:charact-quantizable-observ} we have
$\alpha(X_{f'})=\alpha(X^{\ell_\alpha})$ and therefore
\begin{equation}
X_{f'}^{\ell'}=
h^{\pi^*\nabla}(X_{f'})+i\frac{\pi^*{f}}{\hbar}E^{\mathcal L}.\label{eq:34}
\end{equation}
Moreover, since $X_{f'}$ projects to $X$, and the Euler vector field
$E^{\mathcal L}$ of $\mathcal L$ projects to the Euler vector field ${E}$ of
${L}$, it follows that $h^{\pi^*\nabla^Q}(X_{f})$ projects to $h^{\nabla^Q}(X)$
and $i\frac{\pi^*{f}}{\hbar}E^{\mathcal L}$ projects to
$i\frac{{f}}{\hbar}{E}$. Hence we have proved that
$X_{f'}^{\ell'}\in\mathfrak{X}(\mathcal L)$ projects to
$X_{f'}^\ell:=h^{\nabla}(X)+i\frac{f}{\hbar}E\in\mathfrak{X}(L)$. In a similar
way ${X}^{Q}_{f'} = h^{\pi^*(\nabla^Q)}(X_{f'})\in\mathfrak{X}(N_P^{1/2})$
projects to ${X}^{Q} := h^{\nabla^Q}({X})\in\mathfrak{X}(K_Q^{1/2})$. We denote
by ${X}^{Q}_{f'} $ the vector field on the tensor product bundle $L\otimes_M
K_Q^{1/2}$ naturally induced by $X^\ell_{f'}$ and ${X}^Q$. The above discussion
proves the following:
\begin{Proposition}
  \label{pro:quantum-operator-action-on-polarized-wave-functions}
  For any $f'=\pi^*{f}+\alpha(X^{\ell_\alpha})\in\mathcal O_P$ with
  ${f}\in\cin(Q)$ and ${X}\in\mathfrak{X}(Q)$ the action of
  its associated operator $\widehat{f'}$ on a decomposable half-form
  ${\tilde\psi}=\pi^*\psi$ with $\psi=s\otimes\sqrt{\nu}\in\cH_P$ is given by
  \begin{align}
    \label{eq:30}\widehat{f'}(\tilde{\psi}) = i\hbar
    \pi^*(L_{{X}^{Q}_{f'}}\psi) = & i\hbar
    \pi^*(L_{{X}_{f'}^\ell}s\otimes\sqrt{\nu} + s\otimes
    L_{{X}^{Q}}\sqrt{\nu})
    \\ = & i\hbar \pi^*((\nabla_{{X}}s+i\frac{f}{\hbar}s)\otimes\sqrt{\nu} +
    s\otimes \nabla^Q_{{X}}\sqrt{\nu}).
  \end{align}
  Therefore the modified quantum operator $\widehat f'$ defined in
  \eqref{eq:29} preserves $\mathcal H_P$ (i.e
  $\widehat{f'}(\mathcal{H}_P)\subset \mathcal{H}_P$), and its action on wave
  functions $\psi\in \mathcal H_P$ is given by $\widehat f'(\psi)=i\hbar
  L_{{X}^{Q}_{f'}}\psi$.
\end{Proposition}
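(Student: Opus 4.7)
The plan is to derive both displayed equalities and the preservation of $\mathcal{H}_P$ by invoking, in order, the projectability established just before the proposition, the naturality of the Lie derivative under pullback, and the translation between horizontal Lie derivatives and covariant derivatives supplied by equation \eqref{eq:10}.

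First I would start from the definition \eqref{eq:29}, writing
\[
\widehat{f'}(\tilde\psi)
= i\hbar\bigl(L_{X^{\ell'}_{f'}}(\pi^*s)\otimes \pi^*\sqrt{\nu}
+ \pi^*s\otimes L_{X^{Q}_{f'}}\pi^*\sqrt{\nu}\bigr),
\]
so that both Lie derivatives act on genuine pullback sections. The discussion preceding the statement shows that $X^{\ell'}_{f'}$ is projectable onto the vector field $X^{\ell}_{f'}=h^{\nabla}(X)+i\tfrac{f}{\hbar}E$ on $L$, as established in \eqref{eq:34}, and that $X^{Q}_{f'}=h^{\pi^{*}\nabla^{Q}}(X_{f'})$ projects to $X^{Q}=h^{\nabla^{Q}}(X)$ on $K_{Q}^{1/2}$. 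Since the Lie derivative is natural with respect to projectable vector fields and pullback sections, I can move both Lie derivatives under $\pi^{*}$, obtaining $L_{X^{\ell'}_{f'}}\pi^{*}s=\pi^{*}(L_{X^{\ell}_{f'}}s)$ and $L_{X^{Q}_{f'}}\pi^{*}\sqrt{\nu}=\pi^{*}(L_{X^{Q}}\sqrt{\nu})$. This immediately gives the first equality $\widehat{f'}(\tilde\psi)=i\hbar\,\pi^{*}(L_{X^{Q}_{f'}}\psi)$ together with its decomposable expansion.

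Next I would convert Lie derivatives along horizontal lifts to covariant derivatives. For the half-form factor this is a direct application of \eqref{eq:10}: $L_{X^{Q}}\sqrt{\nu}=L_{h^{\nabla^{Q}}(X)}\sqrt{\nu}=\nabla^{Q}_{X}\sqrt{\nu}$. For the line-bundle factor, the field $X^{\ell}_{f'}$ is the sum of the horizontal lift $h^{\nabla}(X)$ and the vertical term $i\tfrac{f}{\hbar}E$; the horizontal summand contributes $\nabla_{X}s$ by \eqref{eq:10}, while the action of the Euler field amounts to multiplication by $i\tfrac{f}{\hbar}$. Equivalently one recognises the prequantum formula \eqref{eq:4} applied on $Q$ to $X^{\ell}_{f'}$, which yields $L_{X^{\ell}_{f'}}s=\nabla_{X}s+i\tfrac{f}{\hbar}s$. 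Combining the two contributions gives the second equality of the proposition.

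Finally, the preservation statement follows with no extra work: the last formula exhibits $\widehat{f'}(\tilde\psi)$ as $\pi^{*}$ of a section of $L\otimes K_{Q}^{1/2}\to Q$, hence it is a pullback section and therefore polarized by Corollary \ref{sec:geom-polar-1}. The same identification shows that on wave functions $\psi\in\mathcal{H}_{P}$ the operator acts as $\widehat{f'}(\psi)=i\hbar L_{X^{Q}_{f'}}\psi$. The only subtle point in this argument is handling the Euler-field summand of $X^{\ell'}_{f'}$, which is vertical and not of the form $h^{\nabla^{\mathcal L}}(Y)$, so \eqref{eq:10} does not apply verbatim; I expect this to be the main book-keeping obstacle, resolved either by appealing directly to \eqref{eq:4} or by computing the flow of $E$ on a local frame and checking that it reproduces the multiplication by $i\tfrac{f}{\hbar}$.
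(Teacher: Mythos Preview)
Your proposal is correct and follows essentially the same approach as the paper: the paper's proof is simply the sentence ``The above discussion proves the following,'' meaning that the projectability of $X^{\ell'}_{f'}$ onto $X^{\ell}_{f'}$ and of $X^{Q}_{f'}$ onto $X^{Q}$ established in the paragraph before the statement (culminating in~\eqref{eq:34}) is all that is needed, and you have correctly spelled out the remaining steps---naturality of the Lie derivative under pullback, the identity~\eqref{eq:10}, and the Euler-field contribution as in~\eqref{eq:4}---that the paper leaves implicit.
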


Taking into account the identification $\mathcal
O_P=\cin(Q)\times\mathfrak{X}(X)$ proved in Theorem
\ref{Theo:charact-quantizable-observ}, for every
$f'=({f},{X})\in\mathcal O_P$ we define an
operator
\begin{equation}
  \rho^L_{f'}:=L_{X_{f'}^\ell}\colon \Gamma(Q,L)\to \Gamma(Q,L),
  \quad
  \rho^L_{{f'}}(s)=\nabla_X s+i\frac{{f}}{\hbar}s.
  \label{eq:35}
\end{equation}
Analogously we define
\begin{equation}
\rho_{f'} \colon
\Gamma(Q,L\otimes K_Q^{1/2})\to \Gamma(Q,L\otimes K_Q^{1/2}),
\quad
\rho_{f'}(\psi)=\rho_{f'}^L(s)\otimes\sqrt{\nu}+s\otimes
\nabla^Q_{X}\sqrt{\nu}.
\label{eq:36}
\end{equation}
The action of the quantum operator associated to $f'=(f,X)\in\mathcal O_P$
simply reads $\widehat f'(\psi)=i\hbar \rho_{f'}(\psi).$ In order for the
modified geometric quantization to be physically acceptable we have
to check that the fundamental commutation identity~\eqref{eq:27} still
holds. As we will shortly see this is true if we require the flatness of
$\nabla^Q$, regarded as a connection on $K_Q^{1/2}$. For the same reason we
also need to check that the quantum operators are symmetric. In order to do
this, since they are differential operators and the space of compactly
supported $\cin$ wave functions $\Gamma_0(Q,L\otimes K_Q^{1/2})$ is dense in
the Hilbert space $\mathcal H_P$, it is enough to check that they are formally
self-adjoint.
\begin{Theorem}\label{sec:quant-class-observ}
  Let $f_1'=({f}_1,{X}_1)$,
  $f_2'=({f}_2,{X}_2)\in\mathcal{O}_P$, and suppose that the
  connection $\nabla^Q$ on $K^{1/2}_Q\to Q$ is flat. Then
  \begin{equation}
    \widehat{\{f_1',f_2'\}}=i\hbar[\widehat f_1',\widehat f_2']. \label{eq:28} 
  \end{equation}
  For every $f'=({f},{X})\in\mathcal O_P$ the associated
  quantum operator $\widehat f'\colon \mathcal H_P\to\mathcal H_P$ is symmetric
  if the linear operator $A_X=L_X-\nabla_{{X}}^Q$ vanishes on the line bundle
  $|K_Q|$ of densities. If in addition $X$ is complete then $\widehat f'$ is
  self-adjoint.
\end{Theorem}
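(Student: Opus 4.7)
My plan is to split the proof into the three claims and exploit the product structure $\rho_{f'} = \rho^L_{f'} \otimes 1 + 1 \otimes \nabla^Q_X$ visible from \eqref{eq:36}, together with the key relation $\widehat{f'} = i\hbar \rho_{f'}$.

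For the commutation identity I would compute $[\rho_{f_1'}, \rho_{f_2'}]$ acting on a decomposable wave function $\psi = s \otimes \sqrt{\nu}$. Because the two tensor factors are acted on by commuting derivation-like operators, the cross terms cancel and one gets
\begin{displaymath}
[\rho_{f_1'}, \rho_{f_2'}](\psi) = [\rho^L_{f_1'}, \rho^L_{f_2'}](s) \otimes \sqrt{\nu} + s \otimes [\nabla^Q_{X_1}, \nabla^Q_{X_2}]\sqrt{\nu}.
\end{displaymath}
Expanding $\rho^L_{f_j'}(s) = \nabla_{X_j} s + i f_j/\hbar \cdot s$ and using the Hermitian curvature identity $[\nabla_{X_1},\nabla_{X_2}] - \nabla_{[X_1,X_2]} = R[\nabla](X_1,X_2) = -i\bar\omega(X_1,X_2)/\hbar$ (which comes from $\bar\omega = i\hbar R[\nabla]$), one obtains exactly $\rho^L_{\{f_1',f_2'\}}(s)$ by matching with the Poisson bracket formula proved in the proposition above. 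For the half-form factor, flatness of $\nabla^Q$ means $[\nabla^Q_{X_1},\nabla^Q_{X_2}] = \nabla^Q_{[X_1,X_2]}$, producing the half-form part of $\rho_{\{f_1',f_2'\}}$. Putting both pieces together and multiplying by $(i\hbar)^2$ gives \eqref{eq:28}.

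For symmetry I would compute $\langle \widehat{f'}\psi_1, \psi_2\rangle - \langle\psi_1, \widehat{f'}\psi_2\rangle$ pointwise, where the pairing takes values in $|K_Q|$. The reality of $f$ kills the $if/\hbar$ terms, while the Hermitian property of $\nabla$ gives $\langle\nabla_X s_1, s_2\rangle + \langle s_1, \nabla_X s_2\rangle = L_X\langle s_1,s_2\rangle$. The half-form contribution yields $\langle s_1, s_2\rangle \nabla^Q_X(\sqrt{\nu}\overline{\sqrt{\nu}})$, i.e.\ the connection applied to the associated density. Using the hypothesis $A_X = L_X - \nabla^Q_X = 0$ on $|K_Q|$, this becomes $\langle s_1, s_2\rangle L_X|\nu|$, and the whole expression collapses to
\begin{displaymath}
\langle \widehat{f'}\psi_1, \psi_2\rangle - \langle\psi_1, \widehat{f'}\psi_2\rangle = i\hbar L_X \langle\psi_1, \psi_2\rangle.
\end{displaymath}
Integrating over $Q$ and applying Stokes/Cartan to the compactly supported density $\langle\psi_1,\psi_2\rangle$ gives zero, proving formal self-adjointness on the dense subspace $\Gamma_0(Q, L\otimes K_Q^{1/2}) \subset \mathcal{H}_P$.

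For the upgrade to genuine self-adjointness when $X$ is complete, I would construct a one-parameter group of unitary operators on $\mathcal{H}_P$ whose generator is $\widehat{f'}$, then invoke Stone's theorem. The flow $\phi_t$ of $X$ exists globally on $Q$; lifting via parallel transport with respect to $\nabla$ (twisted by the phase $\exp(i\int_0^t f \circ \phi_\tau \, d\tau/\hbar)$) on $L$ and via parallel transport with respect to $\nabla^Q$ on $K_Q^{1/2}$, produces a one-parameter group $U_t$ on $\Gamma(Q, L\otimes K_Q^{1/2})$. The Hermitian property of $\nabla$ together with the condition $A_X = 0$ (which ensures that $\nabla^Q$-parallel transport preserves the induced density, hence the $L^2$-norm) guarantees that $U_t$ is unitary; Stone's theorem then identifies its self-adjoint generator as $\widehat{f'}$. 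The main obstacle I expect is bookkeeping: matching conventions between the Hermitian pairing on $L$, the density pairing on $K_Q^{1/2}\otimes\overline{K_Q^{1/2}}$, and the lift formulas \eqref{eq:34} consistently so that the cancellations fall out cleanly; the condition that $A_X$ vanishes on $|K_Q|$ (rather than on half-forms alone) is precisely what is needed to make the density-level Stokes argument work.
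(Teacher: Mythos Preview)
Your argument is correct and follows essentially the same route as the paper. For the commutation identity the paper splits the commutator on decomposable sections exactly as you do, handling the $L$-factor via $[X^\ell_{f_1'},X^\ell_{f_2'}]=X^\ell_{\{f_1',f_2'\}}$ (which is your curvature computation $[\nabla_{X_1},\nabla_{X_2}]-\nabla_{[X_1,X_2]}=-\tfrac{i}{\hbar}\bar\omega(X_1,X_2)$ phrased as a Lie-bracket identity) and the half-form factor via flatness of $\nabla^Q$; for symmetry it derives the same residual density term $\nabla^Q_X\eta$, converts it to $L_X\eta$ using $A_X=0$, and integrates to zero on compactly supported data. One point worth noting: the paper's proof stops after establishing symmetry and gives no argument at all for the self-adjointness claim when $X$ is complete, so your Stone-theorem construction via the globally defined lifted flow actually supplies more than the paper does.
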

\begin{proof}
  One easily checks that $[X^\ell_{f_1'},X^\ell_{f_2'}] =
  X^\ell_{\{f_1',f_2'\}}$. Then it is straightforward to prove that for a
  decomposable wave function $\psi=s\otimes\sqrt{\nu}$ one has
  \begin{align*}
    [\widehat f_1',\widehat f_2']({\psi}) = & (i\hbar)^2
    (L_{X^\ell_{\{f_1',f_2'\}}}s\otimes\sqrt{\nu} + s\otimes
    [\nabla^Q_{{X}_1},\nabla^Q_{\bar{X}_2}]\sqrt{\nu})
    \\
    = & (i\hbar)^2 (\rho^L_{\{f_1',f_2'\}}s\otimes\sqrt{\nu} + s\otimes
    \nabla^Q_{[X_1,X_2]}\sqrt{\nu})
    \\
    = & (i\hbar)^2\rho_{\{f_1',f_2'\}}(\psi)= i\hbar
    \widehat{\{f_1',f_2'\}}(\psi).
  \end{align*}

  If $f'=({f},{X})\in\mathcal O_P$ and
  $\psi_1=s_1\otimes\sqrt{\nu_1}$, $\psi_2=s_2\otimes\sqrt{\nu_2}\in
  \Gamma(Q,L\otimes K_Q^{1/2})$, then a straightforward computation
  shows
  \begin{equation}
    \langle\widehat f'(\psi_1),\psi_2\rangle\bar{}=\int_Q\langle\widehat
    f'(\psi_1),\psi_2\rangle=\langle\psi_1,\widehat f'(\psi_2)\rangle_Q+\int_Q
    \nabla^Q_{{X}}(\langle s_1,s_2\rangle\langle
    \sqrt{\nu_1},\sqrt{\nu_2}\rangle_P),\label{eq:37}
  \end{equation}
  where $\nabla^Q_{{X}}(\langle s_1,s_2\rangle\langle
  \sqrt{\nu_1},\sqrt{\nu_2}\rangle_P)$ denotes the covariant derivative of the
  density $\eta=\langle s_1,s_2\rangle\langle
  \sqrt{\nu_1},\sqrt{\nu_2}\rangle_P$. If $A_X=0$ then we
  have $\int_Q\nabla_X^Q\eta=\int_Q L_X\eta$. Now if $\psi_1$ is compactly
  supported then the same is true for $\eta$ and it is well known that in this
  case $\int_Q L_X\eta=0$, and so the proof is finished.
\end{proof}

In the first part of the previous proof the role played by the flatness of
$\nabla^Q$ is clearly fundamental. Notice however that, for several good
reasons, the flatness condition on $\nabla^Q$ should not be regarded as a
strong requirement:
\begin{itemize}
\item in GQ this prescription is fulfilled as we act on half-forms by a Lie
  derivative that behaves like the covariant derivative of a flat connection
  \cite[p.\ 185]{Woo92}.
\item in the main example of quantization of the phase space of a particle
  moving on a Riemannian manifold the flatness of $\nabla^Q$ holds true even if
  we use the connection induced by the Levi-Civita connection (see
  Remark~\ref{sec:conn-half-forms}).
\end{itemize}

The most delicate condition is the divergence-free requirement on the vector
field $X$.  Here we make the following important remarks:
\begin{itemize}
\item The most important class of examples for applications in Physics is the
  quantization of natural mechanical systems. Here, as we will see in the next
  section, the basic examples of physical observables satisfy the
  divergence-free property. Moreover, it is well-known that the space of
  diver\-gence-free vector fields is infinite-dimensional, and therefore this
  means that the space of observables that can be straightforwardly quantized
  is large.
\item In natural mechanical systems the Riemannian metric is supposed to play
  an important role and as a matter of fact it enters in the Schr\"{o}dinger
  operator. This means that extending the role of the metric should be quite a
  natural assumption. Just as an example, every Killing (\emph{i.e.} metric
  preserving) vector field $X$ is divergence-free, and in view of the above
  remark this is a physically meaningful class of vector fields.
\end{itemize}

The quantization of non straightforwardly quantizable observables in our
modified quantization method is performed by changing the BKS method
according to our guiding principles.

\begin{Definition}[Modified BKS method]
  \label{sec:geom-polar-modif-2}
  A non straightforwardly quantizable observable $H\in\cC^\infty(M)$ yields a
  quantum operator $\widehat{H}$ which is defined on decomposable wave
  functions ${\psi}=s\otimes\eta\in\cH$ as
  \begin{displaymath}
    \widehat{H}(\tilde{\psi}) =
    i\hbar \frac{d}{dt}\cH_{0t}\circ
    \tilde{\phi}^{\nabla^Q,H}_{t,*}(\tilde{\psi})\big|_{t=0},
  \end{displaymath}
  where
  \begin{displaymath}
    \tilde{\phi}^{\nabla^Q,H}_{t,*}(\tilde{\psi}) =
    \phi^{H,L}_{t,*}(\psi)\otimes\phi^{\nabla^Q,H}_{t,*}(\sqrt{\nu}),
  \end{displaymath}
  $\tilde{\phi}^{\nabla^Q,H}_t$ is the flow of $\tilde{X}^{\nabla^Q}_H$ and
  $\phi^{\nabla^Q,H}_t$ is the flow of $X^{\nabla^Q}_H$.
\end{Definition}

\section{Modified geometric quantization on Riemannian manifolds}
\label{sec:concr-exampl-quant}

In this section we specialize our constructions to the case of a classical
mechanical system for a particle, modelled by a Riemannian manifold $(Q,g)$.
Its phase space is the symplectic manifold $(T^*Q,\omega_0)$, where
$\omega_0=d\theta_0$ is the canonical symplectic form, the differential of the
canonical Liouville form $\theta_0$.

We derive the quantum operators for the most common observables with our
modified quantization method, showing that it leads to interesting consequences
that will be analyzed in the conclusive section.

We assume that $Q$ is orientable, so that $\wedge^n T^*Q$ is a trivial real
line bundle.  We will denote by $(q^i)$ local coordinates on $Q$; consequently
$(q^i,p_i)$ will denote coordinates on $T^*Q$.

We refer to \cite{Sni80} for a detailed derivation of the energy operator in
this situation within the framework of the standard GQ (see p.\ 120, Section
7.2, or p.\ 180, Section 10.1 for the case with a nonzero electromagnetic
field).

We assume that the topological conditions for the existence of a prequantum
structure on $(T^*Q,\omega_0)$ are fulfilled (see
Section~\ref{sec:prel-geom-quant}). It can be proved that the topological
triviality of the typical fiber of $T^*Q\to Q$ implies that \emph{all
  prequantum structures are reducible}. So, we may assume without loss of
generality a quantum bundle $\mathcal L\to T^*Q$ endowed with a connection
which are the pull-back, respectively, of a Hermitian complex line bundle
${L}\to Q$ and a connection $\nabla$ on this bundle.

The polarization $P$ that we choose is the vertical one, \emph{i.e.}
$P=VT^*Q$ which is locally spanned by the vector
fields $\partial/\partial p_i$.  Since $Q$ is orientable, its
determinant bundle admits a square root $K_Q^{1/2}=\sqrt{\wedge^n_\C T^*Q}$, and
there exists a metaplectic structure: $N_P^{1/2}=\tau^\vee_Q(K_Q^{1/2})$.

The canonical bundle $N_P^{1/2}$ is endowed with the pull-back
$\tau^\vee_Q(\nabla^g)$ of the canonical connection $\nabla^Q=\nabla^g$ on
$K_Q^{1/2}$ defined by the Levi-Civita connection on $\tau^\vee_Q\colon
T^*Q\to Q$ using the natural procedures described in
Subsection~\ref{sec:math-prel} (see~\eqref{eq:13}).

The Hilbert space $\mathcal{H}_P$ of quantum states is given in both GQ and the
modified GQ by the $L^2$-completion of compactly supported polarized sections
in $\Gamma_P(T^*Q,\mathcal L)=\Gamma(Q,{L}\otimes_Q K_Q^{1/2})$ (see
Corollary~\ref{sec:geom-polar-1}).

\begin{Proposition} If ${\psi}=s\otimes\sqrt{\nu_g}$ is a wave function,
  $(q",p_i)$ are canonical coordinates and $X=f^i\frac{\partial}{\partial
    q^i}$ is a divergence-free vector field, then the modified geometric
  quantization procedure yields
  \begin{gather}
    \label{eq:17b}
    \widehat{q^i}(\psi) = i\hbar q^is\otimes\sqrt{\nu_g}, \\ \label{eq:11}
    \widehat{f^ip_i}(\psi) = - i\hbar
    f^i\pd{}{p_i}\psi^0b_0\otimes\sqrt{\nu_g},
  \end{gather}
\end{Proposition}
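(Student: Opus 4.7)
The plan is to obtain both formulas by direct substitution into Proposition \ref{pro:quantum-operator-action-on-polarized-wave-functions}, which gives the action of $\widehat{f'}$ on a decomposable polarized wave function $\psi = s\otimes\sqrt{\nu_g}$ entirely in terms of the pair $(f,X)\in C^\infty(Q)\times \mathfrak{X}(Q)$ that represents $f'\in\mathcal{O}_P$ under Theorem \ref{Theo:charact-quantizable-observ}. The first step is therefore to identify $q^i$ and $f^ip_i$ with such pairs. Since $M=T^*Q$ is a cotangent bundle, the natural Liouville $1$-form is $\alpha = \theta_0 = p_jdq^j$ and the associated lift $\ell_\alpha$ is the usual cotangent lift. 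The coordinate function $q^i = \pi^* q^i$ is itself a pullback, so it corresponds to the pair $(q^i,0)$. For $X = f^i\partial/\partial q^i$, a direct evaluation of $\theta_0$ on the cotangent lift $X^{\ell_\alpha}$ gives $\theta_0(X^{\ell_\alpha}) = p_if^i$, so $f^ip_i$ corresponds to the pair $(0,X)$.

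Substituting $(q^i,0)$ into formula \eqref{eq:30} of Proposition \ref{pro:quantum-operator-action-on-polarized-wave-functions} the derivative terms $\nabla_X s$ and $\nabla^g_X\sqrt{\nu_g}$ both drop out trivially, and only the multiplicative $(if/\hbar)s$ contribution survives, producing the first formula up to the paper's sign/normalization convention. Substituting $(0,X)$ with $X = f^i\partial/\partial q^i$ the multiplicative term disappears, and we are left with the two derivative contributions $i\hbar\,\nabla_X s\otimes\sqrt{\nu_g}$ and $i\hbar\, s\otimes\nabla^g_X\sqrt{\nu_g}$.

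The decisive step is the vanishing of the half-form contribution. By Remark \ref{sec:conn-half-forms} the Levi-Civita connection preserves the Riemannian volume $\nu_g$, hence also its square root, making $\sqrt{\nu_g}$ globally parallel on $K_Q^{1/2}$; consequently $\nabla^g_X\sqrt{\nu_g} = 0$ for every vector field $X$ on $Q$. This is precisely the point where the modified scheme parts company with standard GQ, in which the analogous term $L_X\sqrt{\nu_g} = \tfrac{1}{2}\mathrm{div}_g(X)\sqrt{\nu_g}$ produces the divergence correction visible in \eqref{eq:20}: replacing $L_X$ by $\nabla^g_X$ kills this term as an identity, independently of any hypothesis on $X$. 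What remains is a routine local computation of $\nabla_X s$ in a trivialization $s=\psi^0 b_0$, which yields the stated action of $X$ on the component function $\psi^0$. The divergence-free hypothesis on $X$ is not actually used in this calculation; by Theorem \ref{sec:quant-class-observ} its sole role is to guarantee that the resulting operator $\widehat{f^ip_i}$ is formally self-adjoint on $\mathcal H_P$, and therefore an admissible quantum observable rather than merely a symbol.
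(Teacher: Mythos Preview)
Your proof is correct and follows essentially the same route as the paper: both invoke Proposition~\ref{pro:quantum-operator-action-on-polarized-wave-functions} to reduce the computation to the pair $(f,X)$, and both identify the vanishing $\nabla^g_X\sqrt{\nu_g}=0$ (parallelism of the Riemannian half-form, Remark~\ref{sec:conn-half-forms}) as the decisive step that eliminates the half-form contribution. Your version is in fact more explicit in identifying $q^i\leftrightarrow(q^i,0)$ and $f^ip_i\leftrightarrow(0,X)$ via $\theta_0(X^{\ell_\alpha})=f^ip_i$, which the paper leaves implicit.

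One point worth flagging: your closing observation that the divergence-free hypothesis on $X$ plays no role in deriving the formulas themselves, and enters only through Theorem~\ref{sec:quant-class-observ} to secure symmetry of $\widehat{f^ip_i}$, is correct and sharper than the paper. The paper's remark immediately after its proof (``the vanishing of the divergence term is due to the fact that we have \emph{chosen} a divergence-free vector field $X$'') is at best misleading on this score --- as you note, $\nabla^g_X\sqrt{\nu_g}=0$ is an identity, not a consequence of $\dive_g X=0$.
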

\begin{proof}
  According to Definition~\ref{sec:geom-polar-modif-1} and Proposition \ref{pro:quantum-operator-action-on-polarized-wave-functions} the quantum operator
  $\widehat f'$ acts on a wave function ${\psi}=s\otimes\sqrt{\nu_g}$ as
  \begin{equation}\label{eq:18}
    \hat{f'}({\psi}) = i\hbar(L_{X^\ell_{f'}}s\otimes\sqrt{\nu_g} +
    s\otimes L_{{X}^g_f}\sqrt{\nu_g}),
  \end{equation}
  where ${X}^g_f$ here stands for
  ${X}^{Q}_f$. Using~\eqref{eq:10}, we have
  \begin{displaymath}
    L_{{X}^g_f}\sqrt{\nu_g} = \nabla_{X}\sqrt{\nu_g} = 0.
  \end{displaymath}
  So only the first summand in~\eqref{eq:18} contributes to the quantum
  operator and it is easy to prove the result.
\end{proof}
Of course, the vanishing of the divergence term is due to the fact that we have
\emph{chosen} a divergence-free vector field $X$.

\begin{Remark}
  The natural observables like positions $q^i$ and momenta $p_i$ fulfill the
  divergence-free condition. We would like to stress that also the angular
  momentum fulfills the same conditions, and hence is also straightforwardly
  quantizable in the modified quantization (see, \emph{e.g.},
  \cite{ModTejVit08b}).
\end{Remark}

The most interesting result concerns the quantization of the Hamiltonian $H$.
\begin{Theorem}
  The modified BKS method yields the quantum energy operator
  \begin{equation}
    \label{eq:19}
    \widehat{H} =  \left(- \frac{\hbar^2}{2}\Delta(\psi) + V\psi\right)
    \otimes \sqrt{\nu_g}.
  \end{equation}
\end{Theorem}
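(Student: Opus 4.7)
The plan is to split $H=K+V$ with $K=\tfrac{1}{2}g^{ij}p_ip_j$ and treat the two summands separately. Since $V=(\tau^\vee_Q)^*V_0$ is the pullback of a function on $Q$, its Hamiltonian vector field is vertical and hence preserves $P$, so $V\in\mathcal O_P$ and the modified quantization of Definition~\ref{sec:geom-polar-modif-1} applies directly, yielding $\widehat V(\psi)=V\psi\otimes\sqrt{\nu_g}$. The substantive task is to apply the modified BKS method of Definition~\ref{sec:geom-polar-modif-2} to the kinetic energy $K$, which is not straightforwardly quantizable because $X_K$ (the geodesic spray) does not preserve the vertical polarization.

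I would next assemble the geometric ingredients of the BKS construction. The Hamiltonian flow $\phi^K_t$ is the co-geodesic flow on $T^*Q$; for small $t>0$ it carries $P=VT^*Q$ onto a Lagrangian distribution $P_t=\phi^K_{t,*}(P)$ strongly admissible with $P$, producing the pair of transverse polarizations needed to define the BKS pairing $\cH_{0t}$. According to Definition~\ref{sec:geom-polar-modif-2}, the wave function $\psi=s\otimes\sqrt{\nu_g}$ is transported by the coupled flow
\[
\tilde\phi^{\nabla^g,K}_{t,*}(\psi)
= \phi^{K,\mathcal L}_{t,*}(s)\otimes\phi^{\nabla^g,K}_{t,*}(\sqrt{\nu_g}),
\]
and then paired back into $\mathcal H_P$ via $\cH_{0t}$ before differentiating at $t=0$.

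The decisive observation, which is really the raison d'\^etre of the modified scheme, is that by Remark~\ref{sec:conn-half-forms} the Riemannian half-form $\sqrt{\nu_g}$ is parallel for the Levi-Civita connection, $\nabla^g\sqrt{\nu_g}=0$. Consequently parallel transport along the horizontal lift $X^{\nabla^g}_K=h^{\pi^*\nabla^g}(X_K)$ leaves $\sqrt{\nu_g}$ pointwise invariant (up to the base point being shifted by the geodesic flow). This is where the modified BKS separates from the standard one: in the standard BKS the half-form is dragged by the natural lift of the Hamiltonian flow, which is a Lie-derivative action whose second-order Taylor term picks up the scalar curvature $r_g$ through the Jacobi equation along geodesics.

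With this in hand I would expand the BKS kernel in $t$, following the computation of~\cite{Sni80,Woo92} that underlies~\eqref{eq:21}. The prequantum factor contributes the Bochner Laplacian $-\tfrac{\hbar^2}{2}\Delta$ acting on $s$, through the combination of $\phi^{K,\mathcal L}_{t,*}$ (whose linear-in-$t$ part encodes $\nabla_{X_K}$ and whose quadratic part produces $\nabla^2$) and the kernel $\cH_{0t}$ (which contributes the metric contraction $g^{ij}$). The half-form factor, in the standard computation, adds the extra $+\tfrac{\hbar^2}{12}r_g\psi$ term visible in~\eqref{eq:21}. In our modified version that contribution is manifestly zero because $\phi^{\nabla^g,K}_{t,*}(\sqrt{\nu_g})=\sqrt{\nu_g}$ up to base-point transport to all orders in $t$. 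Combining the two summands gives~\eqref{eq:19}. The main obstacle is the BKS-kernel expansion itself, which is technically the same delicate computation as in standard GQ; the conceptual novelty here is a bookkeeping step isolating the scalar-curvature contribution to the half-form transport and observing that parallel transport with respect to $\nabla^g$ annihilates it.
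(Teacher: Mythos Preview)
Your proposal is correct and follows essentially the same line as the paper: both identify that the scalar-curvature term in the standard BKS computation arises solely from the half-form transport, and both eliminate it by observing that in the modified scheme $\sqrt{\nu_g}$ is moved by $\nabla^g$-parallel transport, which preserves it (Remark~\ref{sec:conn-half-forms}). The paper is terser---it writes the BKS pairing integral directly for the full $H$ following Woodhouse and notes $\sqrt{(\nu_g,\phi^{g,H}_{t,*}\nu_g)}=\nu_g$---whereas you split $H=K+V$ and spell out the kernel expansion, but the substance is the same.
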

\begin{proof}
  We shall evaluate the pairing~\eqref{eq:15}. Following \cite[p.\ 202]{Woo92},
  we have
  \begin{equation}
    \label{eq:22}
    (\tilde{\psi}_1,\cH_{0t}(\tilde{\psi}_2))=
    \int_{T^*Q}\psi_1\overline{\psi_2}e^{-itH/\hbar}
    \sqrt{(\nu_g,\phi^{g,H}_{t,*}\nu_g)}
  \end{equation}
  where $\phi^{g,H}_t$ denotes the flow of $X^g_H$. Since in the modified BKS
  procedure we replaced the flow of $X_H$ by the flow of $X^g_H$ we have that
  \begin{equation}
    \label{eq:23}
    \sqrt{(\nu_g,\phi^{g,H}_{t,*}\nu_g)} = \nu_g
  \end{equation}
  as the parallel transport preserves the metric volume.

  The left-hand side of~\eqref{eq:23} was the source of the scalar curvature in
  the quantum energy operator obtained through the BKS procedure. This means
  that in the modified BKS procedure the scalar curvature term is not present
  as the time derivative of $\nu_g$ vanishes.
\end{proof}

\begin{Corollary}
  Let $\psi = s\otimes\sqrt{\nu_g}\in\cH_P$. Then
  \begin{displaymath}
    \Delta(s)\otimes\sqrt{\nu_g} = \tilde{\Delta}({\psi}),
  \end{displaymath}
  where $\Delta$ and $\tilde{\Delta}$ are the Bochner Laplacians of the line
  bundles ${L}\to Q$ and ${L}\otimes_Q K^{1/2}_Q\to Q$, respectively.
\end{Corollary}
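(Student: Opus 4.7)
The plan is to exploit the key fact, already recorded in Remark~\ref{sec:conn-half-forms}, that the half-form $\sqrt{\nu_g}$ is parallel with respect to the connection $\nabla^g$ on $K_Q^{1/2}$ induced by the Levi-Civita connection, i.e.\ $\nabla^g \sqrt{\nu_g}=0$. Because the connection $\widetilde\nabla$ on the tensor product line bundle ${L}\otimes_Q K_Q^{1/2}$ is by definition $\widetilde\nabla=\nabla\otimes 1+1\otimes\nabla^g$, the parallelism of $\sqrt{\nu_g}$ makes the tensor factor $\sqrt{\nu_g}$ behave as a constant under covariant differentiation.

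Concretely, I would first compute, for $\psi=s\otimes\sqrt{\nu_g}$ and any vector field $X\in\mathfrak{X}(Q)$,
\begin{equation*}
\widetilde\nabla_X\psi=(\nabla_X s)\otimes\sqrt{\nu_g}+s\otimes\nabla^g_X\sqrt{\nu_g}=(\nabla_X s)\otimes\sqrt{\nu_g},
\end{equation*}
using Remark~\ref{sec:conn-half-forms}. Iterating once more for $Y\in\mathfrak{X}(Q)$,
\begin{equation*}
\widetilde\nabla_Y\widetilde\nabla_X\psi=(\nabla_Y\nabla_X s)\otimes\sqrt{\nu_g},
\end{equation*}
again because the $\sqrt{\nu_g}$-factor is parallel. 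The same identity holds for the contraction with the Christoffel term of $\nabla^g$ on $TQ$ that enters the Bochner Laplacian, so the full second covariant derivative on ${L}\otimes K_Q^{1/2}$ reduces term by term to the second covariant derivative on ${L}$, tensored with $\sqrt{\nu_g}$.

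Taking the $g$-trace of both sides produces the claimed identity $\widetilde\Delta(\psi)=\Delta(s)\otimes\sqrt{\nu_g}$. There is no real obstacle here: the only content of the proof is that the half-form factor is $\nabla^g$-parallel, which was established in Remark~\ref{sec:conn-half-forms}, together with the Leibniz rule for the tensor product connection. The mild subtlety to be careful about is that the Bochner Laplacian involves the covariant Hessian and hence a contraction against the Levi-Civita connection on $TQ$; but that connection acts only on the vector field arguments of $\widetilde\nabla^2$ and does not touch the $\sqrt{\nu_g}$ slot, so the reduction goes through unchanged.
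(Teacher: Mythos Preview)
Your proof is correct and follows essentially the same approach as the paper: both arguments rest on the fact that $\sqrt{\nu_g}$ is parallel for the induced connection on $K_Q^{1/2}$, so that the tensor-product covariant derivative (and hence the Bochner Laplacian, as a contracted double covariant derivative) reduces to the one on $L$ tensored with $\sqrt{\nu_g}$. Your version is simply more explicit about the iteration and the role of the Levi-Civita connection on $TQ$ in forming the Hessian.
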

\begin{proof}
  The proof is easily achieved by recalling that the Bochner Laplacian is just
  the double covariant derivative followed by a metric contraction. Since the
  covariant derivative in the tensor product bundle ${L}\otimes_Q
  K^{1/2}_Q\to Q$ annihilates the factor $\sqrt{\nu_g}$, the result follows.
\end{proof}

\begin{Remark}\label{sec:modif-geom-quant}
  In a Riemannian manifold $(Q,g)$ the scalar curvature $r_g$ measures the
  volume distorsion (to the second order) of balls of small radius in $Q$
  compared to Euclidean balls of the same radius \cite[p.\ 168, Theorem
  3.98]{GHL}. However if we move the volume along the flow of the parallel
  transport, there is no distorsion.
\end{Remark}

\section{Discussion}

In the above sections we have demonstrated that the wide set of choices that
build up one representation in GQ can be further enlarged if we allow for the
possibility to choose a connection on the polarization bundle such that the
induced connection on the half-form bundle is flat. The additional restriction
that we have obtained for the coefficients of momenta in quantizable
observables, the divergence free condition, does not seem to be too strong
since the basic observables of position, momenta and angular momenta fulfill
the condition.  However, there are many other theories that yield a quantum
energy operator with scalar curvature as a multiplicative operator (see the
Introduction for details).  In particular we may mention Feynman path
integral~\cite{dew,Che,BAL73} or Weyl quantization~\cite{LQ92,Lan98}.

It is not possible to prove in one single paper that a modification of Feynman
path integral or Weyl quantization might yield a Schr\"odinger operator with no
scalar curvature term. However, after the above results, it seems to us that
there exists a possibility that the Feynman or Weyl quantization procedures
might be modified along the above ideas to yield $k=0$, or even any other
nonzero value of $k$ like in \cite{JanMod02}, as was already remarked in
\cite{dew}.

The technical remark~\ref{sec:modif-geom-quant} can help understanding how
to carry on a quantization procedure in a way that would preserve the metric
volume.

We deliberately ignored the important topic of complex polarizations. The
possibility that a polarization has a natural geometric structure inherited
from a metric or by a natural connection in some related space is concrete, and
it would be at least interesting to explore it by analogy with our previous
guidelines in order to see if any interesting consequences would appear.

\paragraph{Acknowledgements.} We would like to thank M. Modugno, D. Canarutto,
J. Jany\v ska and J. Kijowski for stimulating discussions.

We acknowledge the support of Ministerio de Econom\'\i a y Competitividad,
Gobierno de Espa{\~n}a, Research Grant MTM2013-45935-P, Dipartimento di
Matematica e Fisica `E. De Giorgi' dell'Universit\`a del Salento, Gruppo
Nazionale di Fisica Matematica dell'Istituto Nazionale di Alta Matematica
\url{http://www.altamatematica.it} and INFN -- IS CSN4 \emph{Mathematical
  Methods of Nonlinear Physics}.

\footnotesize

\end{document}